\documentclass[aps,reprint,superscriptaddress,nofootinbib]{revtex4-1}

\pdfoutput=1
\usepackage{silence}
\WarningFilter{revtex4-1}{Failed to patch the float package}

\usepackage[hidelinks]{hyperref}

\urlstyle{same}
\usepackage{natbib}
\usepackage{graphicx}
\usepackage{amsmath, amsfonts, amssymb, mathtools}
\usepackage{bm}
\usepackage{lineno}
\usepackage[dvipsnames]{xcolor}
\usepackage[T1]{fontenc}
\usepackage{comment}
\usepackage{xcolor}
\usepackage{soul}
\graphicspath{{figs/}}

\usepackage{titlesec}
\usepackage{lipsum}  
\usepackage{booktabs}

\usepackage{array}
\usepackage{wrapfig}
\usepackage{multirow}
\usepackage{tabu}
\usepackage{makecell}
\usepackage{xspace}
\usepackage{placeins}

\renewcommand{\thetable}{\arabic{table}}

\newcommand\E[2][]{\ensuremath{\mathbb{E}_{#1}\left[#2\right]\xspace}}

%From other paper (al857 addition)
\usepackage{bm, amsfonts, amsmath, amssymb, xspace, mathtools, amsthm}
\usepackage{graphicx}
\usepackage{hyperref}
\usepackage[dvipsnames]{xcolor}
\newcommand\myshade{85}
\colorlet{mylinkcolor}{BrickRed}
\colorlet{mycitecolor}{NavyBlue}
\colorlet{myurlcolor}{Aquamarine}
\hypersetup{
  linkcolor  = mylinkcolor!\myshade!black,
  citecolor  = mycitecolor!\myshade!black,
  urlcolor   = myurlcolor!\myshade!black,
  colorlinks = true,
}

% Useful command to reference subfigures
\newcommand\subref[2]{\hyperref[#1]{\ref*{#1}#2}}

\newtheorem{theorem}{Theorem}
\newtheorem{lemma}{Lemma}

\bibliographystyle{apsrev4-1}

\begin{document}

\title{Quantifying synergy and redundancy in multiplex networks}

\author{Andrea I Luppi}
\affiliation{Division of Anaesthesia and Department of Clinical Neurosciences, University of Cambridge, Cambridge, UK}
\affiliation{Montreal Neurological Institute, McGill University, Montre\'al, Canada}
\thanks{Correspondence: \url{al857@cam.ac.uk}}

\author{Eckehard Olbrich}
\affiliation{Max Planck Institute for Mathematics in the Sciences, Leipzig, Germany}

\author{Conor Finn}
\affiliation{Max Planck Institute for Mathematics in the Sciences, Leipzig, Germany}

\author{Laura E. Su\'arez}
\affiliation{Montreal Neurological Institute, McGill University, Montre\'al, Canada}

\author{Fernando E. Rosas}
\affiliation{Department of Informatics, University of Sussex, Brighton, UK}
\affiliation{Centre for complexity science, Imperial College London, London, UK}
\affiliation{Centre for psychedelic research, Department of Brain Sciences, Imperial College London, London, UK}
\affiliation{Centre for Eudaimonia and human flourishing, University of Oxford, Oxford, UK}

\author{Pedro A.M. Mediano}
\affiliation{Department of Computing, Imperial College London, London, UK}

\author{J{\"u}rgen Jost}
\affiliation{Max Planck Institute for Mathematics in the Sciences, Leipzig, Germany}
\affiliation{ScaDS.AI, Leipzig University, Germany}
\affiliation{Santa Fe Institute, Santa Fe, USA}

%%%%%%%%%%%%%%%%%%%%%%%%%%%%%%
\hspace{.1cm}
\begin{abstract}
\section*{Abstract}
\vspace{-.2cm}
\noindent 
Understanding how different networks relate to each other is key for obtaining a greater insight into complex systems. Here, we introduce an intuitive yet powerful framework to characterise the relationship between two networks comprising the same nodes. We showcase our framework by decomposing the shortest paths between nodes as being contributed uniquely by one or the other source network, or redundantly by either, or synergistically by the two together. Our approach takes into account the networks' full topology, and it also provides insights at multiple levels of resolution: from global statistics, to individual paths of different length. We show that this approach is widely applicable, from brains to the London public transport system. In humans and across 123 other mammalian species, we demonstrate that reliance on unique contributions by long-range white matter fibers is a conserved feature of mammalian structural brain networks. Across species, we also find that efficient communication relies on significantly greater synergy between long-range and short-range fibers than expected by chance, and significantly less redundancy. Our framework may find applications to help decide how to trade-off different desiderata when designing network systems, or to evaluate their relative presence in existing systems, whether biological or artificial.

\end{abstract}
\maketitle

\section*{Introduction}

Networks provide an intuitive representation of systems comprising interacting components, enabling the use of rigorous mathematical tools to study complex systems across domains of science. A useful way of leveraging network representations involves assessing the (dis)similarity between the topologies of two networks built upon the same underlying nodes. For example, one may want to compare the train and flight transportation networks linking a given set of cities, compare different modes of social interaction between the same group of people, or contrast different types of connections bridging different brain areas. 
Approaches to this question are usually based on comparing different network topologies in terms of their `distance.' In effect, various scalar metrics to capture the divergence between different networks have been introduced for this purpose, including graph edit distance~\citep{gao2010graphedit}, graph kernel methods~\citep{borgwardt2005shortest}, spectral methods based on the graph's eigenspectrum~\citep{delange2014frontiers, suarez2022connectomics, dedomenico2015natcomm}, information-theoretic metrics based on various graph properties~\citep{dedomenico2016prx, bagrow2019information}, among others~\citep{kawahara2017brainnetcnn, mheich2017siminet, shimada2016scirep, schieber2017natcomm, cao2013levenshtein, koutra2013deltacon, lacasa2021beyond} 
 (reviewed in ~\citep{wills2020metrics,  tantardini2019comparing, mheich2020brain}). However, these approaches tend to characterise the divergence between networks in terms of a single number, thus yielding a one-dimensional description of a potentially rich discrepancy. 
 
Here we introduce a new framework that characterises the similarity between networks using a multidimensional approach, which brings to light different ways in which networks can be (dis)similar and --- most importantly ---  complementary to each other. Our approach draws inspiration from the literature on Partial Information Decomposition (PID), which develops the principle that information is not a monolithic entity but can be disentangled into qualitatively different types, including redundancy (information that can be independently retrieved in more than one source), unique (information that can only be retrieved from a specific source), and synergy (information that cannot be retrieved from a single source, but only by taking into account multiple sources at once)~\citep{williams2010nonnegative, mediano2021towards, luppi2022natureneuro}. Inspired by the PID framework, in this paper we introduce the \textit{Partial Network Decomposition} (PND): a formalism that disentangles the relationship between networks in terms of different `similarity modes,' quantifying the extent to which two or more networks are redundant, unique, or synergistic. To illustrate the broad applicability of this framework, we showcase how PND provides new insights into real-world artificial and biological networks by analysing two scenarios: London's public transportation systems, and the topologies of short- vs long-range connections in the brains of humans and over 100 other mammalian species.

\section*{Results}

\subsection*{Conceptualising synergy and redundancy of networks}

Our central question is whether two networks built over the same nodes can be considered synergistic or redundant, or whether they provide unique contributions. Intuitively, two networks can be said to be redundant if the combination of the two is in some sense equivalent to each one of them separately. Conversely, one could say that they are synergistic if, when considered together, the two networks complement each other in some sense. 
Let us illustrate this principle in the case of transport networks: when considering the topologies of e.g. bus and train networks, synergy occurs if the best way --- in terms of cost or time --- to move between stations involves using both modalities. In contrast, the two networks are redundant if each provides alternative routes of equal efficiency, and their combination does not provide additional gains.

To capitalise on this intuition, 
let us consider two undirected binary connected networks (with no isolated nodes\footnote{This is only for simplicity of exposition --- the framework as outlined in Materials and Methods applies to disconnected networks as well.}) defined on the same set of nodes, denoted by $A$ and $B$ (Fig.~\ref{fig_ToyNets}). 
One can identify synergy between networks $A$ and $B$ whenever the most efficient (shortest) path between two nodes $x$ and $y$ involves traversing a combination of edges from $A$ and $B$. Practically, this means that the most efficient path between $x$ and $y$ is found on the joint network, constructed by placing an edge between each pair of nodes that are directly connected in either network $A$ or network $B$, such that the joint is also a binary undirected network. One can then operationalise efficient paths between nodes $x$ and $y$ over networks $A$ and $B$ as being redundant if they are of equal length --- such that a traveller would be indifferent between traversing via network $A$ or $B$. 
\footnote{Note that this differs from having multiple paths between $x$ and $y$ within network $A$ or network $B$~\citep{di2012redundancy}: what we care about is that $x$ and $y$ be reachable in the same number of steps within network $A$ and within network $B$.} Finally, if the most efficient path between nodes $x$ and $y$ is shorter in network $A$ (respectively, $B$) than in network $B$ (resp., $A$), then it is natural to label it as a ``unique'' contribution of network $A$ (resp., $B$). 
Thus, if the most efficient path between nodes $x$ and $y$ is of length $l_A$ when only using edges from network $A$, of length $l_B$ when only using edges belonging to network $B$, and of length $l_{A\cup B}$ when using edges from $A$ and $B$, then the classification of the link between $x$ and $y$ can be done according to the following procedure:
\begin{itemize}
    \item Synergistic if $\min\{l_A,l_B\} > l_{A\cup B}$.
    \item Unique if $\min\{l_A,l_B\} = l_{A\cup B}$ and also $\max\{l_A,l_B\} > l_{A\cup B}$.
    \item Redundant if $\max\{l_A,l_B\} = l_{A\cup B}$.
\end{itemize} 
It is direct to verify that this procedure identifies the most efficient (shortest) path between a pair of nodes $x$ and $y$ as being synergistic, redundant, or unique to either network $A$ or $B$, with no other outcome being possible. 

Following this rationale, we can obtain a global quantification of the prevalence of synergistic, redundant, and unique paths across the two networks in terms of the proportion of all shortest paths that they respectively account for. This procedure answers the following question:   how many of the possible maximally efficient paths between nodes would an agent know if they knew only network $A$, or only network $B$, or both?

\begin{figure*} 
\centering
\includegraphics[width=0.7\textwidth]{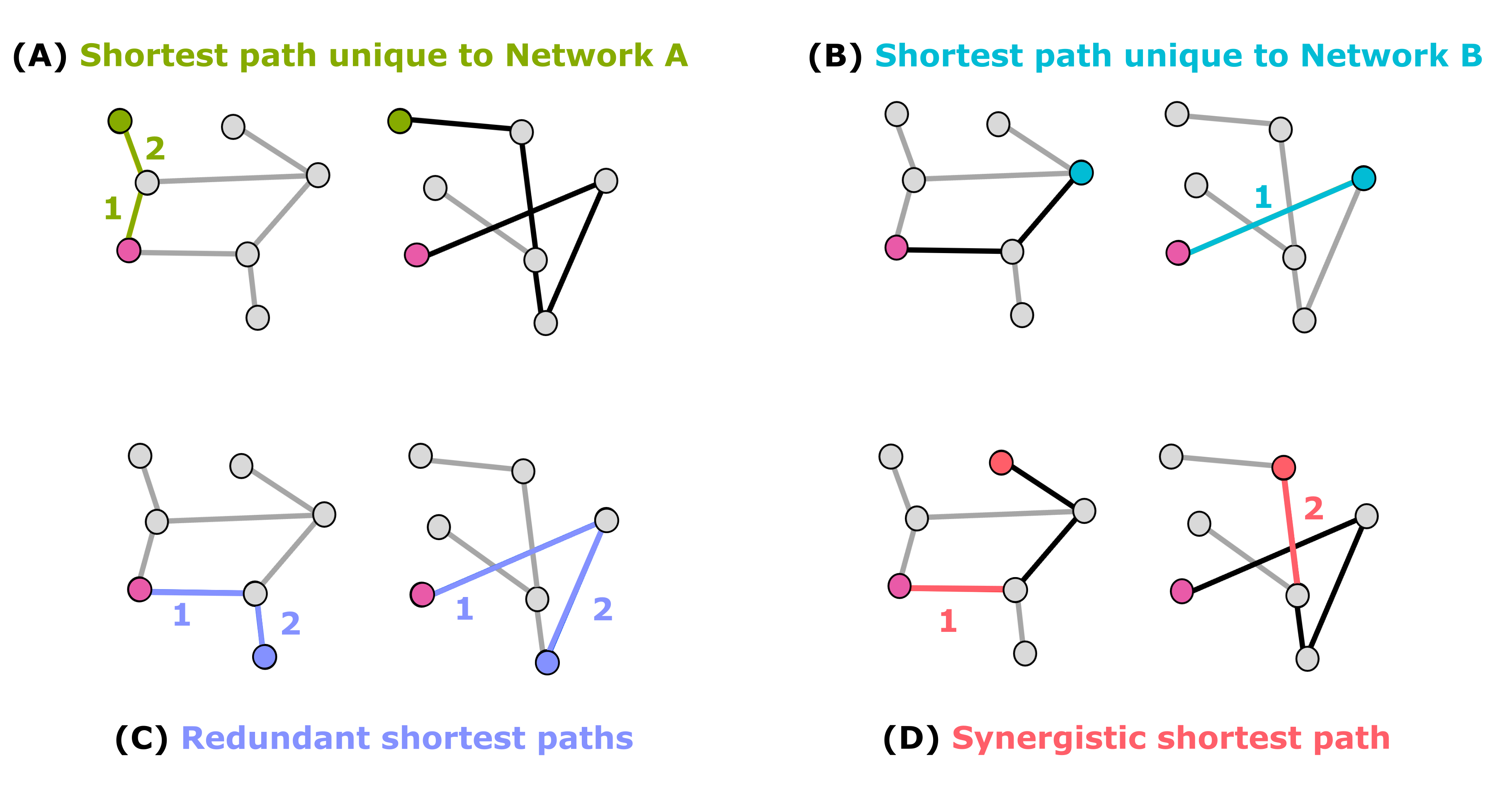}
\caption{{\bf Illustration of the efficiency partial network decomposition between two networks.} 
Green and light blue edges indicate shortest paths that are unique to each network (corresponding path in the other network is shown in black). Dark blue indicates redundant paths, which are equally long in either network; red indicates a synergistic path: shorter when the two networks are combined, than in either individual network (shown in black).}
\label{fig_ToyNets}
\end{figure*}

In addition to a global-level assessment, our approach also allows us to obtain further insight by focusing on the relevance of different scales (i.e., paths of different length). 

In effect, it is possible that the interactions between the two networks may be different for paths of different length --- e.g. short paths may be more redundant while longer ones are more synergistic. 
Our method provides a straightforward way to obtain such insight: since it provides a decomposition of the efficiency between each pair of nodes, one can simply group the network's shortest paths in terms of their length $l$, and check the proportion of them that are synergistic, redundant, or unique.\footnote{Note that length-1 paths (which correspond to direct edges) can only be, by definition, either redundant or unique.}

Finally, we emphasise that this running example of shortest paths in two networks is merely a special case of the more general Partial Network Decomposition introduced here. The full framework is applicable to any network measure of interest and any number of networks, and is described in detail in Materials and Methods.

\subsection*{Partial network decomposition in random network models}

To build some initial intuitions on how different networks may contribute to the most efficient paths of a combined network, we constructed pairs of binary undirected Erd\H{o}s-R\'enyi networks of different densities (ranging from $1\%$ to $100\%$), and evaluated the synergistic, redundant, and unique contributions between them. Results show that shortest paths become less synergistic and more redundant as the density of links grows (Fig.~\subref{fig_ErdosRenyi}{A}). In effect, the majority of maximally efficient paths on the joint network are synergistic when the networks are both sparse (i.e. both with densities of $5\%$ or less). Synergy is also present up to approximately $15\%$ density, thereafter dropping off rapidly. Unique contributions from one network tend to dominate when the other network is below approximately $15\%$ density, thereafter also levelling off rapidly. Thus, when the two networks' densities are imbalanced, unique contributions from the denser one tend to predominate. Conversely, when the two networks have similar density (and both above $15\%$ density), then the majority of the maximally efficient paths are redundant, with redundancy increasing gradually with both networks' density (Fig.~\subref{fig_ErdosRenyi}{B}).

To further develop intuitions, we also used partial network decomposition to investigate the small-world character of networks~\citep{watts1998nature}. Our setup considered two copies of the same lattice network, and progressively randomly rewired one of them by $1\%$ increments while preserving the degree sequence. For each level of randomisation, we calculated the efficiency partial network decomposition between the original lattice network and its randomised counterpart (Fig.~\subref{fig_ErdosRenyi}{C}).
Our results reveal two clearly distinct regimes. At low levels of randomisation, redundancy dropped very rapidly as randomisation increased, replaced by a prominent unique contribution from the rewired network, and also an increasing prevalence of synergy between the two. Then, after a threshold of randomisation has been surpassed (approximately $9\%$ in Fig.~\subref{fig_ErdosRenyi}{C}), the unique contribution from the randomised network reached its peak and began to decline, while redundancy slows its decline and began to plateau (Fig.~\subref{fig_ErdosRenyi}{C}). In contrast, both synergy and the unique information of the non-rewired lattice grow consistently with the degree of rewiring. 
Remarkably, we observed that the small-world propensity index~\citep{muldoon2016scirep} (see Methods) of the joint network peaks at the same point of the unique information of the rewired network.
\footnote{Note that since the joint network is obtained by combining the original lattice and its randomised counterpart, its density increases as the randomisation means that there are more and more non-overlapping edges. However, the small-world propensity is unaffected by density \citep{muldoon2016scirep}.}

Taken together, these results show that partial network decomposition can provide rich insights about the relationship between two networks. Building on these insights, in the following sections we show results of this machinery applied to real-world networks, both artificial and biological.

\begin{figure*} 
\centering
\includegraphics[width=0.8\textwidth]{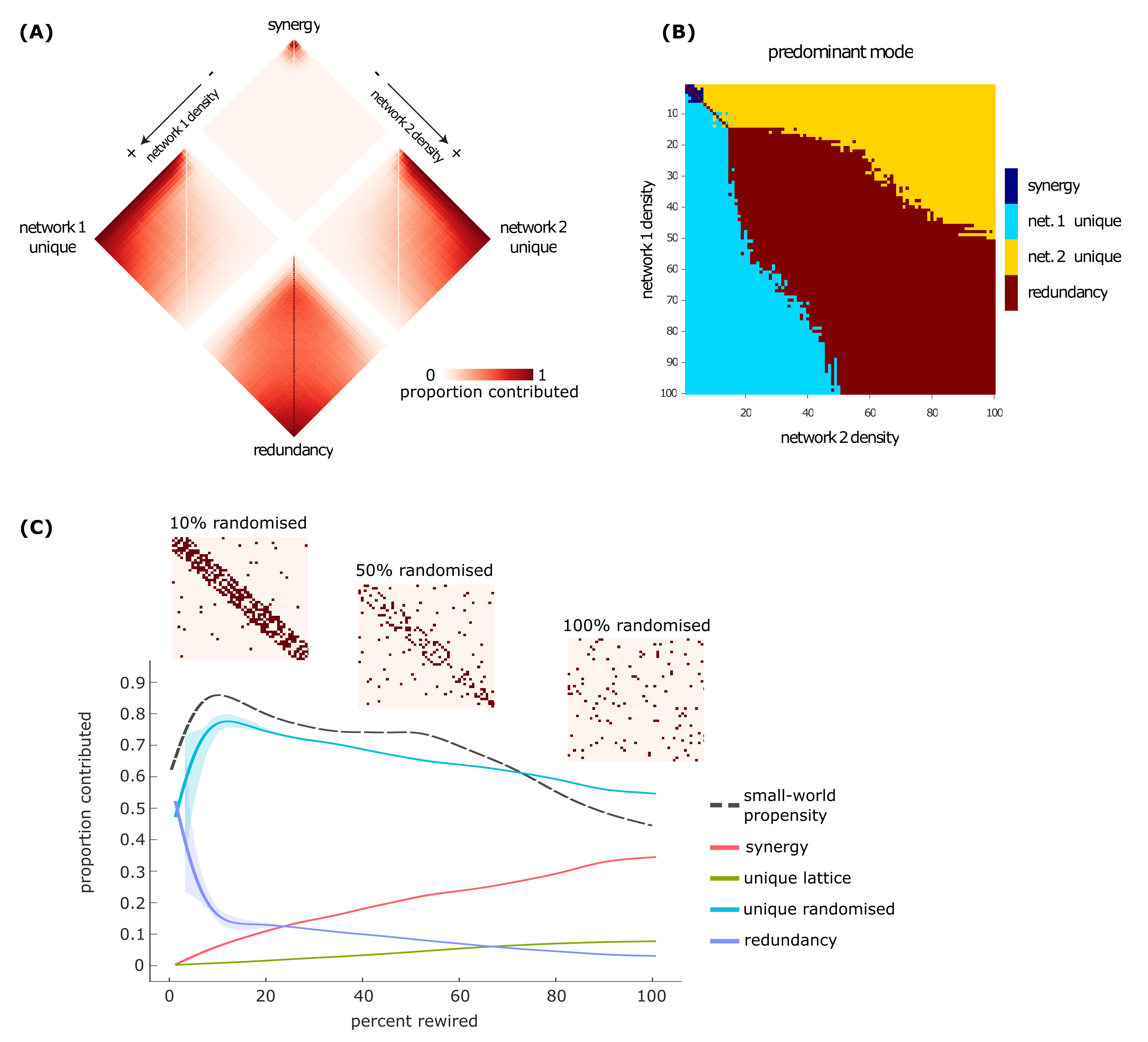}
\caption{{\bf Contributions to shortest paths as a function of network density and rewiring.} 
(A) Erd\H{o}s-R\'enyi networks: each contribution is shown as a function of both networks' densities. (B) Erd\H{o}s-R\'enyi networks: component accounting for the majority of shortest paths, as a function of both networks' densities. (C) Lattice and randomised networks: contributions are shown as a function of the percentage of rewired edges.}
\label{fig_ErdosRenyi}
\end{figure*}

%\FloatBarrier

\subsection*{London transport network}

We analysed real data from a context that may be familiar to everyday life --- transportation networks. We investigated the topologies of two means of transportation in London: underground (i.e. subway) and overground (including different types of local trains) (Fig.~\subref{fig_London_tube}{A}). 
We used PND to gain insight on how these two types of transportation serve the needs of Londoners to move within the city.

Our analyses reveal that there is extremely low redundancy between the topologies of underground and overground. Other similarity modes, however, exhibit a strong dependence on the length of the path. For example, short paths show an approximately $70\%$ unique contribution from underground and nearly $30\%$ contribution from overground. Synergy rises rapidly with path length, accounting for over $60\%$ of all paths at its peak, and over $50\%$ of paths of most lengths. This changes rapidly at the very longest paths, where again underground unique paths predominate, eventually reaching $100\%$ contribution (Fig.~\subref{fig_London_tube}{B}).
Overall, these findings speak about the efficiency of the design of these networks, which serve the city with almost no redundancy and substantial synergy.

To evaluate the significance of these findings, we compared the obtained results against the decomposition arising from a null distribution involving randomly rewiring both networks, while preserving the degree sequence to account for the potential confounding effects of this low-level network property (Fig.~\subref{fig_London_tube}{C}). 
We found that the London transport network relies on unique contributions from the underground network significantly more than would be expected by chance ($p = 0.007$). Intriguingly, although redundancy is by far the least prevalent term in the decomposition, its value is nevertheless significantly greater ($p < 0.001$) than what would be expected based on two random networks of equal density and degree sequence (Fig.~\subref{fig_London_tube}{C}). The other two contributions (overground unique and synergy) did not significantly differ from their degree-preserving null counterparts. However, when null distributions were obtained using purely random networks (with same density as the original ones) instead of degree-preserving random networks,
the unique contribution of the overground network was also significantly greater than expected from chance ($p < 0.001$), whereas the contribution of synergy was significantly lower than expected from chance ($p = 0.002$). Thus, we see that the degree sequence plays a role in determining the decomposition. 

\begin{figure*} 
\centering
\includegraphics[width=0.8\textwidth]{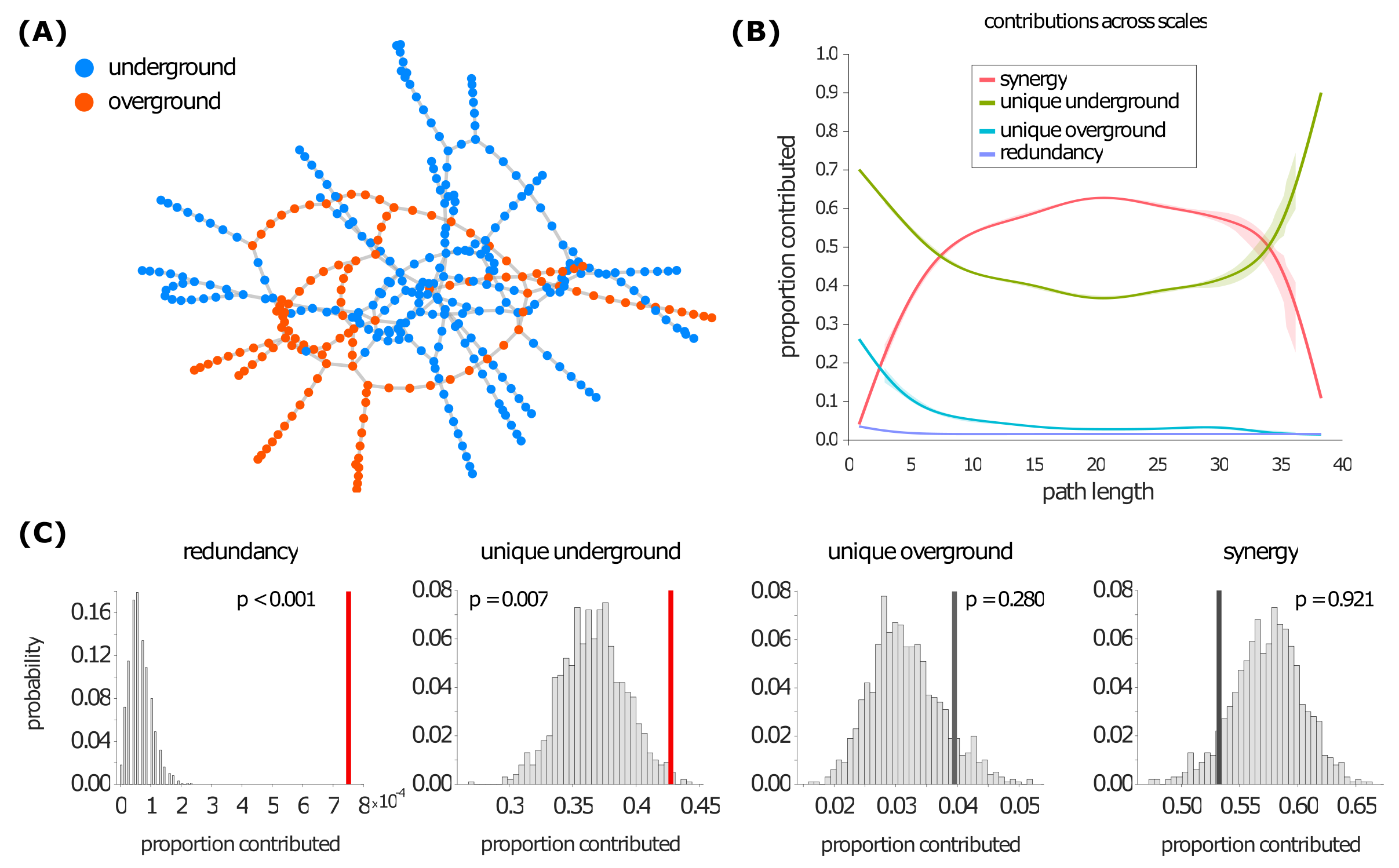}
\caption{{\bf London transport networks.} 
(A) The underground network (blue) and the overground network (orange). (B) The relative proportion of shortest paths accounted for by redundant, unique, and synergistic contributions, as a function of path length. (C) Empirical results (vertical lines) against a null distribution obtained from degree-preserving randomisation of the two networks. Red line indicates statistical significance. }
\label{fig_London_tube}
\end{figure*}

\subsection*{Connectivity networks in the human brain}

The next step in our analysis was to investigate the relationship between the networks of white matter fibers that link spatially proximal regions (i.e., short-range fibers) and spatially distant regions (i.e., long-range fibers) within the human brain (note that we use ``short-range'' and ``long-range'' to refer to the physical (Euclidean) distance between the two regions they connect --- not to be confused with the shortest path between regions, which is in terms of the number of hops on the network). White matter fibers provide the anatomical scaffold over which communication unfolds in the brain; understanding how their networked organisation supports brain function is a major research topic in neuroscience~\citep{suarez2020tics}. 

We used the PND framework to investigate the relationship between the efficiency of networks of white matter tracts connecting spatially proximal and spatially distant regions of the human brain, reconstructed from in-vivo diffusion MRI tractography in $100$ healthy human adults (see Methods). For each subject, we defined one network as comprising the $50\%$ of fibers connecting the most spatially distant regions in that subject's brain, in terms of greatest Euclidean distance; and a second network as comprising the $50\%$ of white matter fibers connecting the most spatially proximal regions (smallest Euclidean distance between them). Thus, the two networks have equal density. As a first (subject-level) analysis, we decomposed the similarity of connections between spatially proximal and spatially distant regions, observed in each subject. 

Results revealed that, on average across individuals, nearly $50\%$ of the maximally efficient paths in the overall structural connectome (combining all white matter tracts) are accounted for by long-range connections between spatially distant regions, while synergistic paths are the second-largest contributors (Fig.~\subref{fig_SC_longshort}{A}). This suggests that long-range fibers play a key role in enabling communication between regions that are distant not only physically, but also topologically (the most efficient path between them involves many hops), despite being more metabolically expensive. 
In addition to this high-level description, however, our approach can also provide more detailed information. We began by considering paths of different length: our framework revealed that synergistic contributions become prevalent for paths comprising multiple hops (Fig.~\subref{fig_SC_longshort}{B}). This may be expected, since the longer the path, the more occasions there may be for making it more efficient with an appropriately-placed connection. 

\begin{figure*} 
\centering
\includegraphics[width=0.98\textwidth]{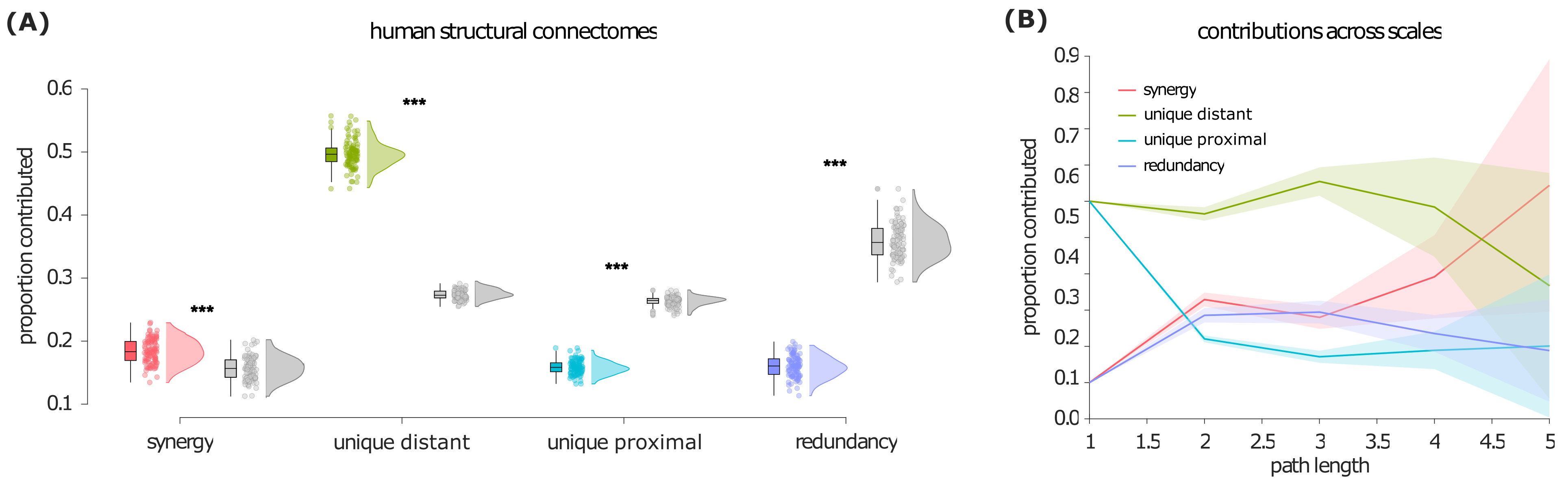}
\caption{{\bf Prevalence of synergistic, unique, and redundant efficiency contributions, for networks of white matter fibers between spatially proximal and spatially distant regions of the human brain.} 
(A) Proportion of the most efficient paths accounted for by each PND term. Box-plots indicate the median and inter-quartile range of the distribution. Each data-point is one subject (N=100). Grey distributions indicate the corresponding values for degree-preserving randomised networks. ***: $p < 0.001$ against null distribution of values obtained from rewired null networks.  (B) Prevalence of synergistic, unique, and redundant contributions as a function of path length, for human structural brain networks.}
\label{fig_SC_longshort}
\end{figure*}

To gain more insight on these results, we repeated our decomposition on consensus networks obtained using a procedure to aggregate individual connectomes~\citep{betzel2019netneurosci}, which provides two networks that are representative of the topology of long-range and short-range white matter fibers in the human brain (see Methods). 
By decomposing the similarity of these representative pair of networks, we see again that approximately $50\%$ of the maximally efficient paths in the network are accounted for by white matter tracts between spatially distant regions, confirming the results on individual subjects discussed above (Fig.~\subref{fig_SC_longshort_nets}{A}). Importantly, these analyses allow us to study these networks at the level of individual edges. Results show that regions best reached via paths along long-range fibers are predominantly located in different hemispheres, and at opposite ends of the anterior-posterior axis (Fig.~\subref{fig_SC_longshort_nets}{B}). Cross-hemisphere connections are also prominent for synergistic paths. In contrast,  redundant and uniquely short-distance paths are primarily located within the same hemisphere (Fig.~\subref{fig_SC_longshort_nets}{B}). This is to be expected, since connecting physically distant nodes by traversing short-distance fibers inevitably involves many hops, making this a suboptimal strategy in terms of minimising path length.

To demonstrate the robustness of our results, we show that the same pattern, with white matter tracts between spatially distant regions accounting for the most of the maximally efficient paths, can be replicated in an independent dataset of human diffusion MRI, which used Diffusion Spectrum Imaging (Supp. Fig.~\ref{fig_Lausanne_S1000_replication}) and defined brain regions using an alternative anatomical parcellation, including subcortical structures. Likewise, we replicate the human results with a higher-resolution version of the Schaefer cortical parcellation ($1000$ nodes); in this case, we see an even stronger contribution of synergy --- though still second to the unique contribution of long-range tracts (Supp. Fig.~\ref{fig_Lausanne_S1000_replication}). 

\begin{figure*} 
\centering
\includegraphics[width=0.8\textwidth]{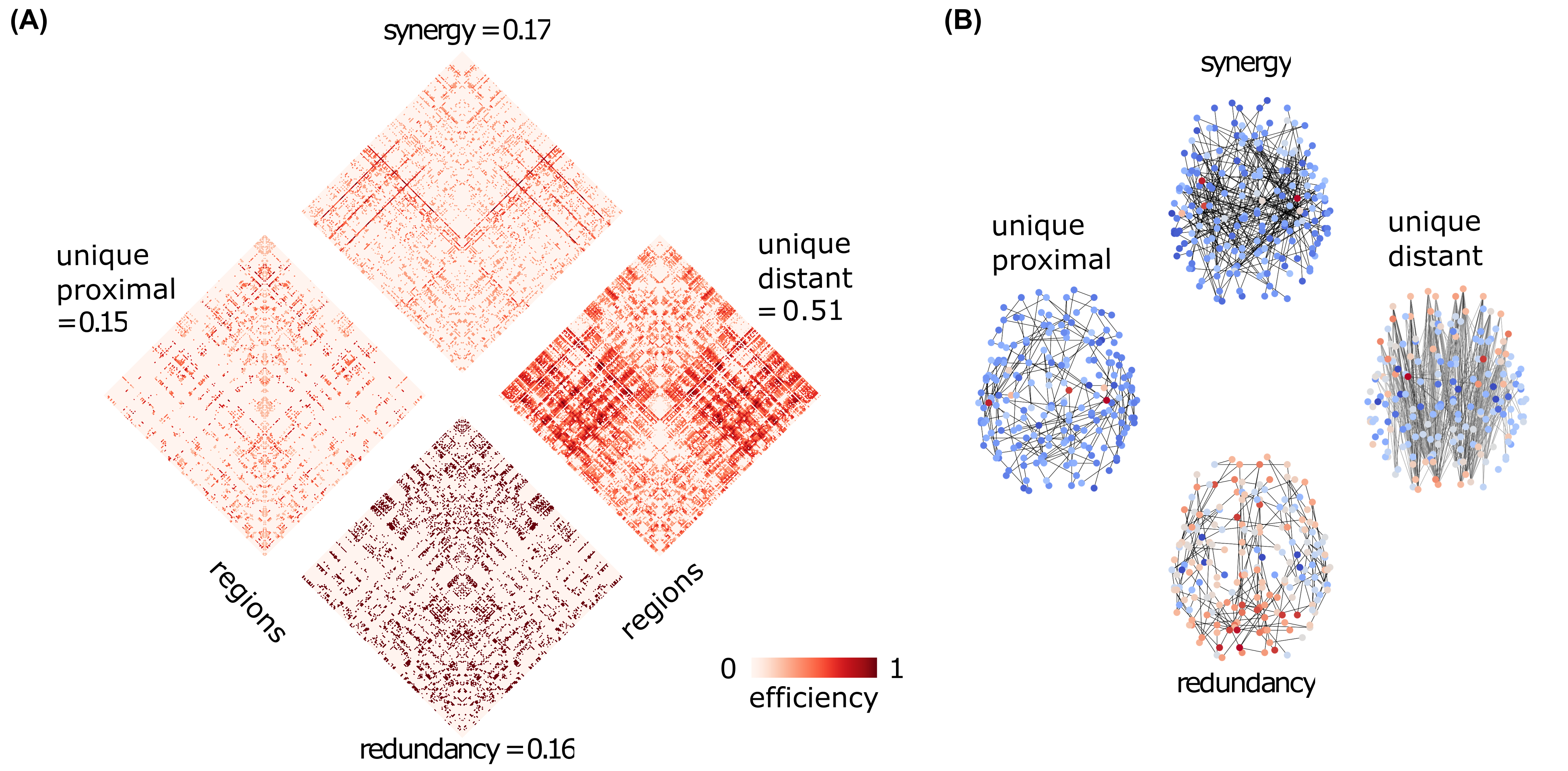}
\caption{{\bf Edge-wise decomposition into synergistic, unique, and redundant efficiency contributions, for group-consensus networks of white matter fibers between spatially proximal and spatially distant regions of the human brain.} (A) Each edge is assigned to the network of the corresponding mode, so that each edge is only non-zero in one of the four networks, and its value reflecting the gain in path against the next best alternative. For each matrix, upper and lower quadrants correspond to inter-hemispheric connections, and right and left quadrants are inter-hemispheric. (B) Same as (A), but with edges plotted in the brain to highlight distinct patterns of connectivity in the human brain. Warmer colour indicates higher node degree. Networks are thresholded for visualisation purposes.}
\label{fig_SC_longshort_nets}
\end{figure*}

We then investigated the role of network topologies in shaping their respective contributions. For this, we consider null models obtained by degree-preserving randomisation of the original structural connectomes. Examining rewired networks enables us to assess whether the results for the human structural connectomes could be observed just by chance for any random network with the same density and degree distribution. We see that if the networks are randomly rewired (while still preserving the degree sequence), synergy becomes the lowest contributor, and redundancy is the highest (Fig.~\subref{fig_SC_longshort}{A}, grey plots). This is clearly distinct from the predominance of long-range connections observed in the empirical structural connectomes. Statistical comparisons confirm that the human connectome relies on white matter tracts between spatially distant regions, and synergy between long- and short-range tracts, significantly more than a random network would. On the other hand, the human connectome makes significantly less use of connections between spatially proximal regions, and especially redundant communication pathways (Table~\ref{table_SC_vs_Nulls}).

 Next, we considered empirical brain networks obtained from a different neuroimaging modality: correlation of functional MRI BOLD signals (i.e., `functional' connectomes). Unlike diffusion MRI, functional connectivity reveals connections between regions in terms of the similarity of their activity over time, thereby providing a different perspective on the network organisation of the human brain. This enables us to ask whether any empirical brain networks will display the same pattern of results reported above, or whether those results are specific to the \textit{structural} connectome.

Again, we consider one network of (functional) connections between spatially proximal regions, and one network of connections between spatially distant regions, for each individual. What we find is that for functional connections, synergy is by far the main contributor (Fig.~\ref{fig_FC_longshort}) --- although exhibiting high variability across subjects. Notably, these results in functional networks greatly differ from the findings in structural networks
(both original and randomly rewired). 

Thus, structural connectomes are dominated by unique contributions from fibers between spatially distant regions; whereas functional connectomes are dominated by synergistic combinations of connections between proximal and spatially distant regions.

\begin{figure} 
\centering
\includegraphics[width=1\columnwidth]{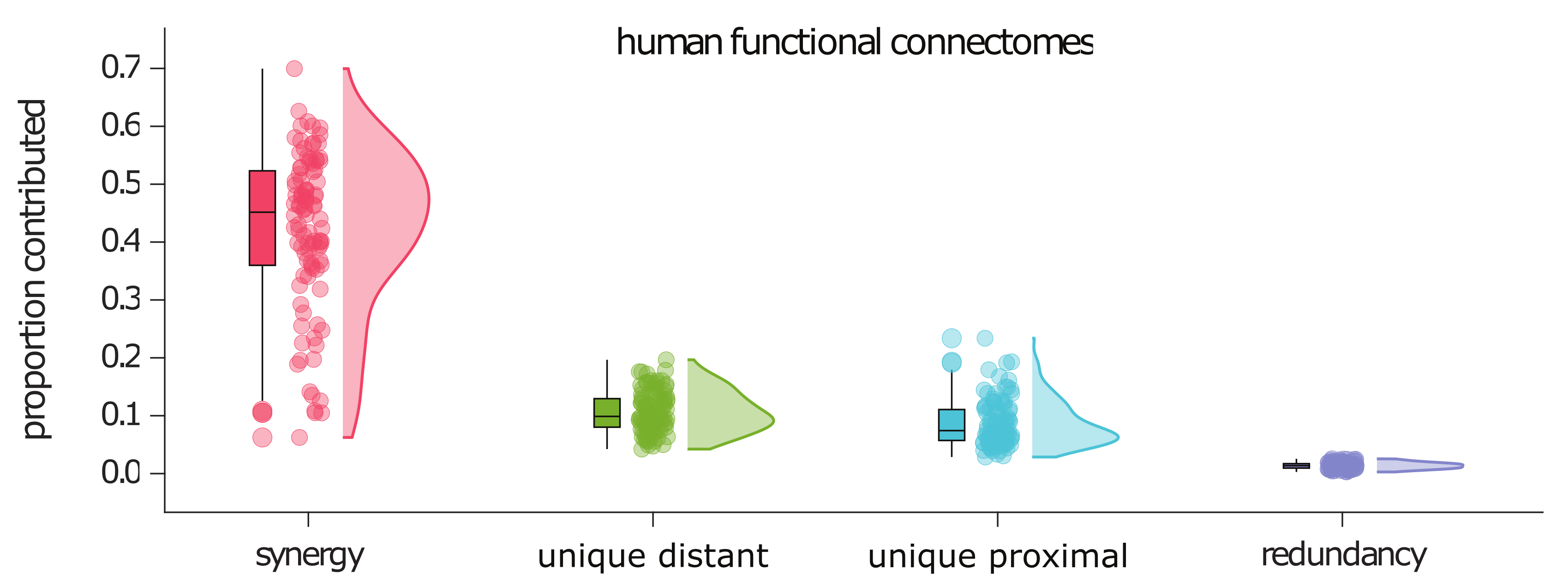}
\caption{{\bf Prevalence of synergistic, unique, and redundant efficiency contributions, for networks of functional connectivity between spatially proximal and spatially distant regions of the human brain.} Y-axis: proportion of shortest paths accounted for by each PND term. Box-plots indicate the median and inter-quartile range of the distribution. Each data-point is one subject (N=100).}
\label{fig_FC_longshort}
\end{figure}

\subsection*{Structural brain networks across mammalian species}

To evaluate if the obtained results are distinctive of the human structural connectome or if this is also observed in other species, we performed analogous analyses over a wide spectrum of structural connectomes from diffusion MRI, 
covering $N=220$ individual animals from $125$ mammalian species~\citep{assaf2020conservation, suarez2022connectomics} (see Methods). 

We performed the same analysis as for the human structural connectomes, using Euclidean distance to divide white matter tracts into those linking spatially proximal regions, and those linking spatially distant regions (each accounting for $50\%$ of the total edges, thereby ensuring two equally dense networks). We then identified the synergistic, redundant, and unique contributions to the global efficiency of the joint network. Similarly to the human structural connectome (and unlike the human functional connectome), our results show that a substantial proportion of the most efficient paths in the network are accounted for by white matter tracts between spatially distant regions (Fig.~\ref{fig_MAMI_violins}). Results also show that mammalian structural connectomes are significantly more synergistic and more reliant on white matter tracts between spatially distant regions, than corresponding randomised null models (see Fig.~\ref{fig_MAMI_violins} and Supp. Table~\ref{table_MAMI_vs_Nulls}). Unlike the human case, however, we find that non-human mammals also involve a substantial contribution of redundancy --- though still significantly less than randomly rewired nulls.

\begin{figure*} 
\centering
\includegraphics[width=0.98\textwidth]{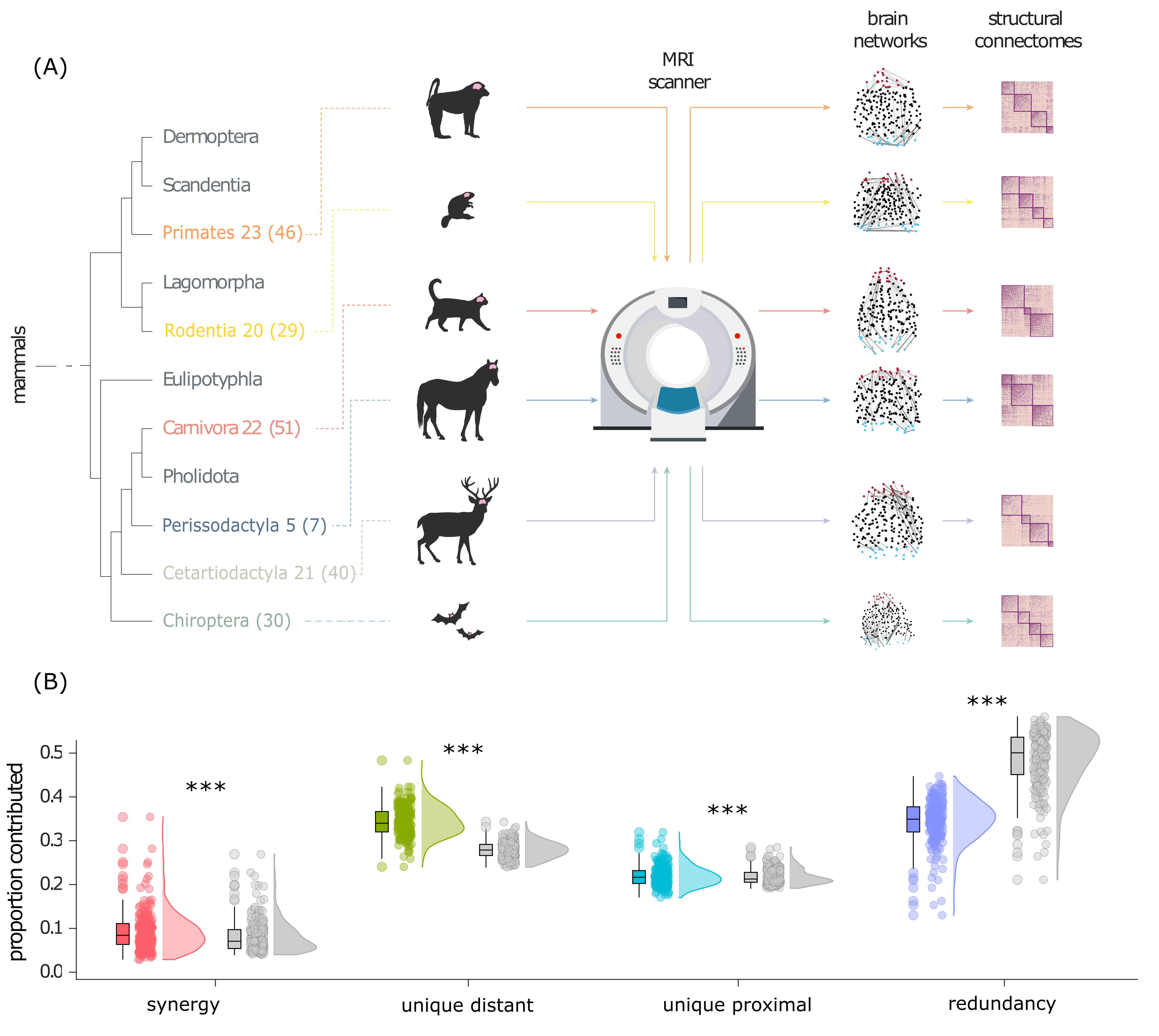}
\caption{{\bf Prevalence of synergistic, unique, and redundant efficiency contributions, for networks of white matter fibers between spatially proximal and spatially distant regions of $\bm{220}$ mammalian brains.} 
Grey distributions indicate the corresponding values for degree-preserving randomised networks. ***: $p < 0.001$ against null distribution of values obtained from rewired null networks. Y-axis: proportion of shortest paths accounted for by each PND term. Box-plots indicate the median and inter-quartile range of the distribution. Each data-point is one animal (N=$220$), or its randomised null.}
\label{fig_MAMI_violins}
\end{figure*}

\section*{Discussion}

Drawing inspiration from the field of information decomposition, here we introduced a simple but powerful framework to quantify qualitatively different aspects of the similarities and differences between two networks in terms of a graph measure of interest. When applied to global efficiency, our approach determines the contribution of each network to the shortest-path communication on the joint network, assessing whether such contributions are redundant, unique to each of the two initial networks, or synergistic --- a novel metric of the complementarity between two networks such that combining them makes some of the shortest paths even shorter. 
By considering shortest paths, this approach takes into account the topology of the entire networks. It can also provide insights at different levels of resolution: from summary statistics (proportion of all shortest paths contributed by each mode); to information about their relative prevalence across paths of different length; to edge-level detail including the number of steps saved with respect to the next-best alternative.

Our analyses illustrate how this approach can be applied to deepen our understanding of brain networks as well as transportation networks. In the case of London's transport networks, redundancy was found to be low in absolute terms, but nevertheless significantly greater than that of random networks.
In brain networks, partial network decomposition highlighted stark differences between the roles of long-range and proximal connections. We observed that reliance on long-range white matter fibers for the majority of efficient paths in the network is a conserved feature of structural connectomes across mammalian species, including humans. This feature is not shared by just any brain network (it is not found in human functional connectomes), nor by random networks having the same degree distribution. On the contrary, we found that the brains of both humans and other mammals also exhibit significantly more synergy, and significantly less redundancy, than corresponding randomly rewired networks. 

It is important to note that, unlike traffic networks, the brain does not rely solely (or perhaps even primarily) on communication via shortest paths, instead likely adopting a variety of mechanisms \citep{seguin2018navigation, seguin2019inferring, avena-koenigsberger2017natrevneurosci}. Therefore, our analysis of connectomes is not intended to quantify how signals actually propagate between brain regions --- rather, our intention was to investigate what roles the topology of proximal and long-range connections in the brain would play in determining shortest paths. 

In this sense, our analysis should be seen as analogous to the numerous investigations of brain small-worldness and global efficiency~\citep{assaf2020conservation, bassett2017natneuro, muldoon2016scirep}, which assess the network's suitability for shortest-path communication without claiming that such communication in fact happens. This approach is, therefore, distinct from studies that evaluate the role of different communication strategies for explaining empirical patterns of inter-regional communication \citep{vazquezrodriguez2019pnas, suarez2020tics}.

By studying random network models, the proposed framework showed that synergy generally predominates when both networks are very sparse ($~ 5\%$ density or less)\footnote{Note that all our comparisons were performed against null models that preserved the original networks' density and degree, ensuring that such low-level properties do not explain our results.} Conversely, synergy rapidly diminishes as the density of links in either network grows. This observation is noteworthy because a previous survey suggested that most biological and human-made networks are indeed sparse~\citep{melancon2006sparse}, which according to our findings could be conducive to synergy. 

Although it is apparent why synergy between networks can be advantageous, it is important to bear in mind that redundancy can also be valuable. The presence of redundant paths between nodes on two different networks means that transport will be unaffected, even if one of the two networks should fail. In contrast, failure of either network would easily jeopardise a synergistic path. In other words, redundancy of networks facilitates robustness, echoing previous findings in time series analysis~\citep{luppi2021networkneuro}. 

\subsection*{Future directions}

Several future extensions of this framework are possible. First, we only considered unweighted, undirected networks. The PND method can be directly applied to weighted networks, provided that the weights can be combined across the two source networks --- as would be the case e.g. for prices or time, in the context of transport networks. Likewise, directed networks can also be straightforwardly accommodated. 
It is also worth acknowledging that transitioning between networks at a given node is not always possible, and not always free of cost, and this may need to be taken into account for future extensions. As an example: if the cost is time (for transport, for instance), then the time spent while waiting between different transport networks may need to be taken into account (e.g., airport layovers); whereas in terms of ticket price, there is often no cost for switching between e.g. different metro lines, but there can be a price for switching between metro and train. This would need to be taken into account when evaluating the advantage conferred by synergy between the two networks. 

Our approach requires a choice of how to operationalise the notion of ``path redundancy.'' While in this work we have adopted equal length of shortest paths as an intuitive criterion for redundancy, this is not the only possibility. An interesting alternative is in terms of identity of edges traversed: will there be a path of length $l$ between nodes  $x$  and $y$ even if edge $q$ is taken out? This may contribute to the characterisation of redundancy as robustness. However, note that the present framework does not take into account whether multiple equally short paths between nodes  $x$  and $y$ exist within network $A$. The existence of such redundant shortest paths within a network has also been proposed as a metric of robustness, though distinct from the one developed here \citep{stanford2022pnas}. Accounting for redundant shortest paths both within and between networks may provide a fruitful avenue for future extensions of the present work.  

Also, the present framework relies on a measure of costs --- here we used the global efficiency, which is based on shortest paths. However, different communication protocols can exist on networks: whereas shortest-path navigation requires knowledge of the global topology, other approaches can be agnostic, such as network diffusion based on random walks, or navigability~\citep{seguin2018navigation, seguin2019inferring, avena-koenigsberger2017natrevneurosci}. Incorporating different communication protocols over networks will be a valuable extension of our framework. One reason that shortest paths are especially appealing for our approach is that, when combining two networks (predicated on the same nodes, i.e. layers of a multiplex), shortest paths can never become longer: only grow shorter (if there is synergy between networks) or remain the same as the shortest of the two. Thus, the path efficiency can only increase or stay the same. However, this property is not guaranteed in the context of diffusion: unless the two networks are identical, the joint network will be denser than either of them, meaning that on average a random walker will have more options to choose between, when leaving node $x$: although a shortcut to node $y$ may now exist, the average number of steps to reach node $y$ may still increase (because the random walker has more chances of choosing an edge that does not belong to the shortest path – and this repeats at every new node). Therefore, in its present form our framework may often lead to no synergy being identified, when considering communication via random walks. 

Relatedly, here we did not consider the question of what dynamics take place over the network \citep{harush2017dynamic, kirst2016dynamic}. For instance, under dynamics that allow for the possibility of congestion (e.g., traffic), adding a shortcut may not always make travel on the network faster. On the contrary, it may even achieve the opposite effect by increasing congestion --- a phenomenon known as ``Braess`s paradox" \citep{braess1968paradoxon}. Thus, two networks that are synergistic in terms of path length may nevertheless end up having ''detrimental synergy'' in terms of travel time, depending on the dynamics of navigation on the network. It will be a fruitful topic of future investigation to investigate such scenarios using our framework.

As a limitation, we note that our use of binary networks required us to threshold the functional connectivity (see Methods), which imposes a somewhat arbitrary criterion (though not devoid of ground \citep{luppi2021networkneuro}). However, this allowed us to keep the same edge density in the FC and SC networks, thereby removing the influence of density from our results. However,  for the London transport example the two networks had different density of edges (underground vs overground); this influences the prevalence of unique paths, since in a binary network an edge corresponds to the shortest (i.e., maximally efficient) possible path between the two nodes at its extremes, so a denser network will have more short paths (whether unique or redundant). Indeed, we found it to be so - but to an extent that was statistically unexpected based on density alone. Future work may extend the present framework to quantify the extent to which a network uses its available connections synergistically, uniquely, or redundantly, as a proportion of the theoretical maximum for a given partner network.

In conclusion, we have developed a simple yet versatile approach to characterise the relationship between two networks, taking into account topological properties and capable of providing both global and local insights. This work may find application when engineering network systems, to help decide how to trade-off different desiderata (such as efficiency vs robustness introduced by a new underground line); or evaluate their relative presence in existing systems, as we have done here for the human brain.

\section*{Materials and Methods}

\subsection*{Mathematical formalisation}

In this section we provide the mathematical foundations of our multiplex network analysis method. It is based on ideas from information theory, which we introduce below, and relies on a mapping between graphs and probability spaces. Beyond the specific metrics showcased in the results of this paper, this formalism paves the way for very general analyses of multiplex networks by combining principles of probability, graphs, and information theory.

\subsubsection*{Mapping probability theory and graphs}
\label{sec:mapping}

A graph can be defined as 
a pair $(\mathcal{V}, \mathcal{E})$, where $\mathcal{V}$ is a set of vertices (or nodes) and $\mathcal{E} \subseteq \mathcal{V} \times \mathcal{V}$ is a set of edges indexed by pairs of vertices. 
Our analyses focus on 
a `utility function' $f$, which corresponds to a network property of interest. An example of such function is the network's global efficiency, which is the average of the efficiency between each pair of nodes. For a given node pair $\omega = (v_1, v_2) \in \mathcal{V} \times \mathcal{V}$ in a network with edges $\mathcal{E}$, efficiency is defined as the inverse of the length of the shortest path between $v_1$ and $v_2$, and we denote it by $f(\omega; \mathcal{E})$. 
Additionally, we consider a probability measure $p(\omega)$ uniform over pairs of distinct nodes $\omega\in \mathcal{V}\times\mathcal{V}$, which establishes a random variable $\Omega$. With this, the average value of the utility function $f$ over the network is given by
\begin{align}
    F(\mathcal{E}) = \E{f(\Omega; \mathcal{E})} ~,
\end{align}
where $\E{f(\Omega; \mathcal{E})} = \sum_\omega f(\omega; \mathcal{E}) p(\omega)$ is the expected value operator. 
In the example above, if $f$ is the pairwise efficiency between two nodes, then $F$ corresponds to the global efficiency of the whole network. 

Now, suppose that we have two different sets of edges $\mathcal{E}_1$ and $\mathcal{E}_2$ for the same set of vertices $\mathcal{V}$, and we wish to compute how they affect the value of $f$. 

For example, a natural question is how much of the value of $f$ can be attributed to the edges in $\mathcal{E}_2$, over and above those in $\mathcal{E}_1$? This can be naturally estimated as
\begin{align}
    \Delta F(\mathcal{E}_1 | \mathcal{E}_2) = F(\mathcal{E}_1 \cup \mathcal{E}_2) - F(\mathcal{E}_2) ~ .
\end{align}
For example, if $F$ is the global efficiency, then $\Delta F(\mathcal{E}_1 | \mathcal{E}_2) \geq 0$ captures how much the efficiency increases by adding $\mathcal{E}_1$ to the links in $\mathcal{E}_2$. 

\subsubsection*{Simple example: Decomposing efficiency in two-layer networks}

Using the established link between graphs and probability theory, one can take inspiration from frameworks to decompose information-theoretic quantities. In particular, here we use ideas from the \textit{Partial Information Decomposition} framework~\cite{williams2010nonnegative}, and develop a new set of tools to decompose the impact of various layers of edges on a given observable. Before presenting the general formalism, here we illustrate the main ideas for a simple case of a two-layer network.

For this, let us consider a multi-layer network $\mathcal{M}$ with vertices $\mathcal{V}$ and two sets of edges (i.e. layers) $\mathcal{E}_1, \mathcal{E}_2$. The full (or joint) network $\mathcal{E}$ contains all edges of both layers, and mathematically is given by the union operator $\mathcal{E} = \mathcal{E}_1 \cup \mathcal{E}_2$. As an example, here we decompose the global efficiency of the joint network $F(\mathcal{E})$ into qualitatively different types of contributions from $\mathcal{E}_1$ and $\mathcal{E}_2$.

Intuitively, for any given pair of vertices $v_1,v_2$ there are four possibilities: the efficiency could be equal in both layers of the network, it could be greater in one layer than in the other, or it could be greater in the joint network than in either layer. We refer to these different cases as \emph{redundant}, \emph{unique}, and \emph{synergistic}, respectively.

Let us now present a plausible (and empirically powerful) definition for each of these contributions. For a given pair of vertices $v_1,v_2$ we can take their redundant efficiency (or just \emph{redundancy}) to be the minimum efficiency one could find by using either $\mathcal{E}_1$ or $\mathcal{E}_2$ -- that is, using the least efficient path fully contained within either of the layers. Mathematically, this can be written as
\begin{align}
    r(v_1, v_2; \mathcal{M}) = \min_i f(v_1, v_2; \mathcal{E}_i) ~ .
\end{align}
Based on this formula, the natural definition of the unique contribution of layer $\mathcal{E}_j$ is its gain in efficiency with respect to the redundancy for the same pair:
\begin{align}
    u_j(v_1, v_2; \mathcal{M}) = f(v_1, v_2; \mathcal{E}_j) - r(v_1, v_2; \mathcal{M}) ~ .
\end{align}
Note that, with this definition, for any given pair of nodes one layer will have zero unique contribution.

Finally, the synergistic contribution corresponds to the increase in efficiency seen in the joint network but not in either layer. Mathematically:
\begin{align}
    s(v_1, v_2; \mathcal{M}) =& f(v_1, v_2; \mathcal{E}) - \big(r(v_1, v_2; \mathcal{M}) \nonumber \\
    &+ u_1(v_1, v_2; \mathcal{M}) + u_2(v_1, v_2; \mathcal{M})\big) ~ .
\end{align}

To link back to our explanation above, we can take the expected value of these quantities with respect to $p(\omega)$ (which in the simplest case is an average across all pairs of nodes). This yields the average quantities $R(\mathcal{M}) = \E{r(\Omega; \mathcal{M})}$ and similarly for unique and synergistic contributions. With this, we can write
\begin{align}
    F(\mathcal{E}) = R(\mathcal{M}) + U_1(\mathcal{M}) + U_2(\mathcal{M}) + S(\mathcal{M}) ~ ,
\end{align}
proving that indeed we have decomposed average global efficiency into four constituent quantities. Note that the global efficiency depends only on the joint network but each atom depends on the full multi-layer network, since they depend on which edges are in $\mathcal{E}_1$ or $\mathcal{E}_2$.

To summarise the overall prevalence of synergistic, unique, and redundant paths in a network, we can define the dominant character of a given node pair $\omega = (v_1, v_2)$ as the highest-order non-zero atom in its partial
network decomposition. For example, in the two-layer case described here, we say that $\omega$
is a synergistic pair if $s(\omega; \mathcal{M}) > 0$; or a unique pair if
$u_j(\omega; \mathcal{M}) > 0$ and $s(\omega; \mathcal{M}) = 0$ (for any $j$);
or a redundant pair if $r(\omega; \mathcal{M}) > 0$ and $s(\omega; \mathcal{M})
= u_j(\omega; \mathcal{M}) = 0$ (for all $j$).

\subsubsection*{General framework of Partial Network Decomposition}

After presenting an elementary example,
let us introduce our full formalism, \textit{Partial Network Decomposition} (PND): an approach to multi-layer network analysis inspired by Partial Information Decomposition (PID)~\cite{williams2010nonnegative}.

Consider a set of nodes $\mathcal{V}$ and $N$ sets of edges $\{\mathcal{E}_i\}_{i=1}^N$, such that the tuples $(\mathcal{V}, \mathcal{E}_i)$ form networks with the same nodes but different edges. 
%the $i$\textsuperscript{th} network. 
For a given set of indices $a = \{n_1,..,n_k\} \subseteq \{1,\dots,N\}$, we define the joint network $(\mathcal{V}, \mathcal{E}^a)$ with $\mathcal{E}^a = \bigcup_{i=1}^k \mathcal{E}_{n_i}$. Furthermore, let us denote collections of such networks by $\alpha=\{a_1,\dots,a_L\}$.
For example, possible collections of networks for $n = 2$ are $\{\varnothing\}, \{\{1\}\}, \{\{1\}, \{1, 2\}\}$, etc.

The key quantity in PND is the \emph{network redundancy function} $F_\cap(\mathcal{E}^{a_1}, \mathcal{E}^{a_2}, ..., \mathcal{E}^{a_L})$, 
which we will also denote with the shorthand notation $F_\cap^\alpha$.

This function should capture how much of the value of $F$ is due to the ``common contribution'' of all networks $\mathcal{E}^{a_1}, ..., \mathcal{E}^{a_L}$. We require this function to have the following properties:
\begin{description}

    \item[Symmetry] $F_{\cap}(\mathcal{E}^{a_1}, ..., \mathcal{E}^{a_L})$ is invariant to re-ordering of $\mathcal{E}^{a_1}, ..., \mathcal{E}^{a_L}$.

    \item[Self-intersection] $F_\cap(\mathcal{E}^a) = F(\mathcal{E}^a)$. This links PND with the original network property of interest, and is analogous to the set-theoretic statement that \mbox{$S \cap S = S$}.

    \item[Deterministic equality] $F_\cap(\mathcal{E}^{a_1}, ..., \mathcal{E}^{a_L}) = F_\cap(\mathcal{E}^{a_1}, ..., \mathcal{E}^{a_{L-1}})$ if $\mathcal{E}^{a_{L-1}} \subseteq \mathcal{E}^{a_L}$. This is analogous to the usual set-theoretic statement that $S \cap T = S$ if $S \subseteq T$.

\end{description}

In principle, one could apply $F_\cap^\alpha$ to any collection of networks $\alpha \in \mathcal{P}_1(\mathcal{P}_1(\{1,...,N\}))$, where $\mathcal{P}_1(S)$ is the power set of $S$ excluding the empty set. However, the deterministic equality axiom allows us to simplify the domain of $F_\cap$: for example, for the two-layer case we know that $F_\cap(\mathcal{E}^{\{1\}}, \mathcal{E}^{\{1,2\}}) = F_\cap(\mathcal{E}^{\{1\}})$, because $\mathcal{E}^{\{1\}}$ is contained in $\mathcal{E}^{\{1,2\}}$ (recall that $\mathcal{E}^{\{1,2\}} = \mathcal{E}_1 \cup \mathcal{E}_2$). In the general case, this means we can restrict the domain of $F_\cap^\alpha$ to the set of \textit{antichains} of $\{1,...,N\}$, denoted here as $\mathcal{A}$, which are naturally organised in a set-theoretic construct known as a \textit{lattice}~\cite{williams2010nonnegative}. For the two-layer case, the possible antichains are $\{\{1\},\{2\}\}$, $\{\{1\}\}$, $\{\{2\}\}$, and $\{\{1,2\}\}$.

Although the three properties are the only ones necessary to formulate our PND over the antichain lattice, there is one more property that will be of interest for the interpretation of the resulting decomposition:

\begin{description}
    
    \item[Monotonicity] $F_\cap(\mathcal{E}^{a_1}, ..., \mathcal{E}^{a_k}) \leq F_\cap(\mathcal{E}^{a_1}, ..., \mathcal{E}^{a_{k-1}})$.
    
\end{description}

Given a network redundancy function $F_\cap^\alpha$, one can also ask how a specific collection of networks $\alpha$ contributes to the overall utility of the joint network. To quantify this, we can define utility ``atoms'' that capture the contribution of $\alpha$ over and above the contribution of other elements lower in the lattice,
\begin{align}
    F_\partial^\alpha = F_\cap^\alpha - \sum_{\beta \prec \alpha} F_\partial^\beta ~ ,
    \label{eq:moebius}
\end{align}
\noindent where $\prec$ (and $\preceq$) correspond to the natural partial order in the antichain lattice~\cite{williams2010nonnegative}. This is equivalent to saying that $F_\partial^\alpha$ is the Moebius inversion of $F_\cap^\alpha$~\cite{williams2010nonnegative}, and can also be written as
\begin{align}
    F_{\cap}^\alpha = \sum_{\beta \preceq \alpha} F_{\partial}^\beta ~ ,
\end{align}
which together with the self-intersection property also implies that the sum of all atoms decomposes $F$ as expected,
\begin{align}
    F(\mathcal{E}) = F_\cap^{\{\{1,...,N\}\}} = \sum_{\alpha \in \mathcal{A}} F_\partial^\alpha ~ .
\end{align}
These atoms are the objects of interest in our analyses, and correspond to the redundant, unique, and synergistic contributions to global efficiency presented in the previous section.

In addition to the formalism above, we need one more ingredient to compute these atoms: a definition of the network redundancy function $F_\cap$. Although more definitions satisfying the properties above could certainly be possible, here we propose the following definition for its suitability and simplicity:
\begin{align}
    F_{\cap}^{\alpha} = \E{\min_{a \in \alpha} f(\Omega; \mathcal{E}^a)} ~ .
    \label{eq:net_red}
\end{align}
With $\Omega$ defined as above and $f$ being the efficiency (inverse of the shortest path length), it is easy to see that this definition satisfies the symmetry, self-intersection, and monotonicity axioms above. Furthermore, it is possible to prove that this definition of $F_{\cap}^{\alpha}$ is \textit{totally monotone} in $\mathcal{A}$, which guarantees that all network atoms $F_\partial^\alpha$ are non-negative (see proofs in the Appendix). With this definition, it is possible to evaluate Eq.~\eqref{eq:net_red} on all antichains, and then use Eq.~\eqref{eq:moebius} to compute all atoms, concluding the calculation.

Finally, it is worth noting that, in analogy with the previous section, we can we
can naturally define a redundancy function for node pairs $\omega$, i.e. $f_\cap^\alpha(\omega) = \min_{a \in \alpha} f(\omega; \mathcal{E}^a)$,
such that $F_\cap^\alpha = \E{f_\cap^\alpha (\omega)}$ (and similarly for $f_\partial^\alpha(\omega)$).
With this, we can directly generalise our previous definition of the dominant character of
a node pair $\omega$ as the set $\alpha$ such that $f_\partial^\alpha(\omega) > 0$ and $f_\partial^\beta(\omega) = 0 ~ \forall \beta \succ \alpha$. It is direct to show that there is always a unique $\alpha$ satisfying this condition for $N = 2$, and numerical experiments suggest this is also the case for $N > 2$ -- although, in the absence of a proof, we leave this as a conjecture for future work.

\subsection*{Data}

\subsubsection*{London transport networks}

The network of London public trail transport was obtained from \url{https://networks.skewed.de/net/london_transport}. Further details are available from the original publication by De Domenico \textit{et al.}~\citep{dedomenico2014pnas}. It is a multiplex network with 3 undirected edge types representing links within the three layers of London train stations: underground, overground and DLR. Here, we combined overground and DLR into a single network, and we then compared the respective contributions of underground versus overground+DLR (which we refer to as simply ``overground'').

\subsubsection*{Human structural connectomes from the\\Human Connectome Project}

We  used    diffusion  MRI  (dMRI)  data  from  the  $100$  unrelated  subjects (54 females and 46 males, mean age $= 29.1 \pm 3.7$ years)  of  the  HCP  $900$  subjects  data  release  \citep{vanessen2013neuroimg}. All  HCP  scanning  protocols  were  approved  by  the  local  Institutional  Review  Board  at  Washington  University  in  St.  Louis. The  diffusion-weighted imaging  (DWI)  acquisition  protocol  is  covered  in  detail  elsewhere~\citep{glasser2013neuroimg}. The  diffusion  MRI  scan  was  conducted  on  a  Siemens  3T  Skyra  scanner  using  a  2D  spin-echo  single-shot  multiband  EPI  sequence  with  a  multi-band  factor  of  3  and  monopolar  gradient  pulse.  The  spatial  resolution  was  $1.25$ mm  isotropic.  TR=$5500$ ms,  TE=$89.50$ms.  The  b-values  were  $1000$,  $2000$,  and  $3000$ s/mm$^2$.  The  total  number  of  diffusion  sampling  directions  was  $90$,  $90$,  and  $90$  for  each  of  the  shells  in  addition  to  $6$  b0  images.  We  used  the  version  of  the  data  made  available  in  DSI  Studio-compatible  format  at  \url{http://brain.labsolver.org/diffusion-mri-templates/hcp-842-hcp-1021}  \citep{yeh2018neuroimg}.

We adopted previously reported procedures to reconstruct the human connectome from DWI data. The  minimally-preprocessed  DWI  HCP  data~\citep{glasser2013neuroimg}  were  corrected  for  eddy  current  and  susceptibility  artifact.  DWI  data  were  then  reconstructed  using  q-space  diffeomorphic  reconstruction  (QSDR~\citep{yeh2011neuroimg}),  as  implemented  in  DSI  Studio  (\url{www.dsi-studio.labsolver.org}).  QSDR  is  a  model-free  method  that  calculates  the  orientation  distribution  of  the  density  of  diffusing  water  in  a  standard  space,  to  conserve  the  diffusible  spins  and  preserve  the  continuity  of  fiber  geometry  for  fiber  tracking.  QSDR  first  reconstructs  diffusion-weighted  images  in  native  space  and  computes  the  quantitative  anisotropy  (QA)  in  each  voxel.  These  QA  values  are  used  to  warp  the  brain  to  a  template  QA  volume  in  Montreal  Neurological  Institute  (MNI)  space  using  a  nonlinear  registration  algorithm  implemented  in  the  statistical  parametric  mapping  (SPM)  software.  A  diffusion  sampling  length  ratio  of  $2.5$  was  used,  and  the  output  resolution  was  1  mm. A  modified  FACT  algorithm  \citep{yeh2013pone}  was  then  used  to  perform  deterministic  fiber  tracking  on  the  reconstructed  data,  with  the  following  parameters~\citep{luppi2021networkneuro}: angular  cutoff  of  $55^{\circ}$,  step  size  of  $1.0$  mm,    minimum  length  of  $10$  mm,    maximum  length  of  $400$ mm,    spin  density  function  smoothing  of  $0.0$,    and  a  QA  threshold  determined  by  DWI  signal  in  the  cerebrospinal fluid.  Each  of  the  streamlines  generated  was  automatically  screened  for  its  termination  location.  A  white  matter  mask  was  created  by  applying  DSI  Studio's  default  anisotropy  threshold  ($0.6$  Otsu's  threshold)  to  the  spin distribution function’s anisotropy  values.  The  mask  was  used  to  eliminate  streamlines  with  premature  termination  in  the  white  matter  region.  Deterministic  fiber  tracking  was  performed  until  $1,000,000$  streamlines  were  reconstructed  for  each  individual.  

For each individual, their structural connectome was reconstructed by drawing an edge between each pair of regions from the Schaefer-200 cortical atlas~\citep{schaefer2018cercor} if there were white matter tracts connecting the corresponding brain regions end-to-end; edge weights were quantified as the number of streamlines connecting each pair of regions. 

A consensus matrix $A$ across subjects (consensus conenctome) was then obtained using the procedure of Wang and colleagues \citep{wang2019sciadv}, as follows: for each pair of regions $i$ and $j$, if more than half of subjects had non-zero connection between $i$ and $j$, $A_{ij}$ was set to the average across all subjects with non-zero connections between $i$ and $j$. Otherwise, $A_{ij}$ was set to zero.

\subsubsection*{Alternative structural connectome from Lausanne dataset}

A total of $N = 70$ healthy participants (25 females,
age $28.8 \pm 8.9$ years old) were scanned at the Lausanne
University Hospital in a 3T MRI scanner (Trio,
Siemens Medical, Germany) using a 32-channel head coil~\citep{griffa2019zenodo}. Informed
written consent was obtained for all participants in accordance with institutional guidelines and the protocol
was approved by the Ethics Committee of Clinical
Research of the Faculty of Biology and Medicine, University
of Lausanne, Switzerland. The protocol included: (1) a magnetization-prepared
rapid acquisition gradient echo (MPRAGE) sequence
sensitive to white/gray matter contrast (1 mm
in-plane resolution, 1.2 mm slice thickness); and (2) a diffusion
spectrum imaging (DSI) sequence ($128$ diffusion-weighted
volumes and a single b0 volume, maximum
$b$-value $8\,000$ s/mm$^2$, $2.2\times2.2\times3.0$ mm voxel size).

Structural connectomes were reconstructed for individual
participants using deterministic streamline tractography
and divided according to a subdivision of the Desikan-Killiany anatomical parcellation, with $234$ cortical and subcortical regions (chosen as the closest match for the 200-node Schaefer parcellation). White matter and grey matter were segmented
from the MPRAGE volumes using the FreeSurfer (version
5.0.0) open-source package, whereas DSI data preprocessing
was implemented with tools from the Connectome
Mapper open-source software, initiating
32 streamline propagations per diffusion direction for
each white matter voxel. Structural connectivity
was defined as streamline density between node pairs,
i.e., the number of streamlines between two regions normalized
by the mean length of the streamlines and the
mean surface area of the regions, following previous work with these data~\citep{vazquezrodriguez2019pnas, hagmann2008plosbiol}. 

\subsubsection*{Human functional connectomes}
We  used    resting-state functional  MRI  (rs-fMRI)  data  from  the  same 100  unrelated  subjects  of  the  HCP  900  subjects  data  release  \citep{vanessen2013neuroimg}. The following sequences were used: structural MRI: 3D MPRAGE T1-weighted, TR = 2,400 ms, TE = 2.14 ms, TI = 1,000 ms, flip angle = 8°, FOV = 224 × 224, voxel size = 0.7 mm isotropic. Two sessions of 15-min resting-state fMRI: gradient-echo EPI, TR = 720 ms, TE = 33.1 ms, flip angle = 52°, FOV = 208 by 180, voxel size = 2 mm isotropic. Here, we used functional data from only the first scanning session, in LR direction. 

\paragraph*{Functional MRI denoising.}
We used the minimally preprocessed fMRI data from the HCP, which includes bias field correction, functional realignment, motion correction, and spatial normalization to Montreal Neurological Institute (MNI-152) standard space with 2 mm isotropic resampling resolution~\citep{glasser2013neuroimg}. We also removed the first 10 volumes, to allow magnetization to reach steady state. Additional denoising steps were performed using the SPM12-based toolbox CONN (\url{http://www.nitrc.org/projects/conn}), version 17f~\citep{whitfield2012conn}. To reduce noise due to cardiac and motion artifacts, we applied the anatomical CompCor method of denoising the functional data. The anatomical CompCor method (also implemented within the CONN toolbox) involves regressing out of the functional data the following confounding effects: the first five principal components attributable to each individual’s white matter signal, and the first five components attributable to individual cerebrospinal fluid (CSF) signal; and six subject-specific realignment parameters (three translations and three rotations) as well as their first-order temporal derivatives~\citep{whitfield2012conn}. Linear detrending was also applied, and the subject-specific denoised BOLD signal time series were band-pass filtered to eliminate both low-frequency drift effects and high-frequency noise, thus retaining frequencies between $0.008$ and $0.09$ Hz.

\paragraph*{Functional network reconstruction.}
Functional connectivity (FC) networks were obtained for each subject by correlating the BOLD timeseries of each pair of regions in the Schaefer atlas. A group-average FC network was then obtained by averaging across all subjects. To ensure comparability of the structural and functional networks, the functional networks were each thresholded to have the same density as the corresponding structural network, a procedure termed ``structural density matching''~\citep{luppi2021networkneuro}. Both structural and functional networks were binarised before analysis.

\subsubsection*{Mammalian connectomes from diffusion MRI}

We used data available online (\url{https://doi.org/10.5281/zenodo.7143143}); below we provide the main information, with further detail available in the original publication~\citep{assaf2020conservation}. For consistency of reporting, where possible we use the same wording as recent publications using this dataset~\citep{assaf2020conservation, suarez2022connectomics}.

\paragraph*{Brain samples.}
The MaMI database includes a total of 220 ex vivo diffusion and T2- and T1-weighted brain scans of 125 different animal species. No animals were deliberately euthanized for this study. All brains were collected based on incidental death of animals in zoos in Israel or natural death collected abroad, and with the permission of the national park authority (approval no. 2012/38645) or its equivalent in the relevant countries. All scans were performed on excised and fixated tissue. Animals' brains were extracted within 24 hr of death and placed in formaldehyde (10\%) for a fixation period of a few days to a few weeks (depending on the brain size). Approximately 24 hr before the MRI scanning session, the brains were placed in phosphate-buffered saline for rehydration. Given the limited size of the bore, small brains were scanned using a 7T 30/70 BioSpec Avance Bruker system, whereas larger brains were scanned using a 3T Siemens Prisma system. To minimize image artefacts caused by magnet susceptibility effects, the brains were immersed in fluorinated oil (Flourinert, 3M) inside a plastic bag during the MRI scanning session.

\paragraph*{MRI acquisition}

A unified MRI protocol was implemented for all species. The protocol included high-resolution anatomical scans (T2- or T1-weighted MRI), which were used as an anatomical reference, and diffusion MR scans. Diffusion MRI data were acquired using high angular resolution diffusion imaging (HARDI), which consists of a series of diffusion-weighted, spin-echo, echo-planar-imaging images covering the whole brain, scanned in either 60 (in the 7T scanner) or 64 gradient directions (in the 3T scanner) with an additional three non-diffusion-weighted images (b0). The b value was $1000$ s/mm$^2$ in all scans. In the 7T scans, TR was longer than $12,000$ ms (depending on the number of slices), with TE of 20 ms. In the 3T scans, TR was $3,500$ ms, with a TE of 47 ms.

To linearly scale according to brain size the two-dimensional image pixel resolution (per slice), the size of the matrix remained constant across all species ($128 \times 96$). Due to differences in brain shape, the number of slices varied between 46 and 68. Likewise, the number of scan repetitions and the acquisition time were different for each species, depending on brain size and desired signal-to-noise ratio (SNR) levels. To keep SNR levels above 20, an acquisition time of 48 hr was used for small brains ($\sim$0.15 ml) and 25 min for large brains ($>1000$ ml). SNR was defined as the ratio of mean signal strength to the standard deviation of the noise (an area in the non-brain part of the image). Full details are provided in the original publication~\citep{assaf2020conservation}.

\paragraph*{Connectome reconstruction}

The ExploreDTI software was used for diffusion analysis and tractography~\citep{leemans2009exploredti}. The following steps were used to reconstruct fibre tracts. To reduce noise and smooth the data, anisotropic smoothing with a 3-pixel Gaussian kernel was applied. Motion, susceptibility, and eddy current distortions were corrected in the native space of the HARDI acquisition. A spherical harmonic deconvolution approach was used to generate fibre-orientation density functions per pixel, yielding multiple ($n \geq 1$) fibre orientations per voxel. Spherical harmonics of up to fourth order were used~\citep{tournier2004direct}. Whole-brain tractography was performed using a constrained spherical deconvolution (CSD) seed point threshold similar for all samples (0.2) and a step length half the pixel size. The end result of the tractography analysis is a list of streamlines starting and ending between pairs of voxels. Recent studies have shown that fibre tracking tends to present a bias where the vast majority of end points reside in the white matter~\citep{tournier2004direct}. To avoid this, the CSD tracking implemented here ensures that approximately 90\% of the end points reside in the cortical and subcortical grey matter.

\paragraph*{Network reconstruction}

Before the reconstruction of the structural networks, certain fibre tracts were removed from the final list of tracts. These include external projection fibres that pass through the cerebral peduncle, as well as cerebellar connections. Inner-hemispheric projections, such as the thalamic radiation, were included in the analysis. Brains were parcellated into 200 nodes using a k-means clustering algorithm. All the fibre end-point positions were used as input, and cluster assignment was done based on the similarity in connectivity profile between pairs of end points. Therefore, vertices with similar connectivity profile have a higher chance of grouping together. The clustering was performed twice, once for each hemisphere. Nodes were defined as the centre of mass of the resulting 200 clusters. Connectivity matrices were generated by counting the number of streamlines between any two nodes~\citep{assaf2020conservation}. The resulting connectivity matrices are hence sparse and weighted adjacency matrices. Matrices were binarised by setting connectivity values to 1 if the connection exists and 0 otherwise.

Even though the sizes of the regions differ across species, we opted for a uniform parcellation scheme (i.e. 200 nodes) for several reasons, in keeping with previous work~\citep{suarez2022connectomics}. First, to our knowledge, there is no MRI parcellation for the brains of the majority of the species studied here. Second, how brain regions correspond to one another across species (i.e. homologues) is still not completely understood for many regions and for many species. Third, comparing networks of different sizes introduces numerous analytical biases because most network measures trivially depend on size, making the comparison challenging. We therefore opted to implement a uniform parcellation scheme across species, allowing us to translate connectomes into a common reference feature space in which they can be compared. Note that this approach does not take into account species-specific regional delineations, nor does it capture homologies between nodes across species, which are still not completely understood.

\subsubsection*{Identification of long- and short-range connections}

For both human structural and functional networks, as well as mammalian structural networks, connection length was defined in terms of Euclidean distance between the centroids of the regions-of-interest constituting the endpoints of each edge. 

A short-range (resp., long-range) connectivity network was obtained as the shortest (resp., longest) $50\%$ of edges. Thus, for each starting network, we obtain a network of its $50\%$ shortest edges, and a network of its $50\%$ longest edges. This approach ensures that both networks have equal density and are therefore on equal terms in terms of their expected contribution to the composite network.

\subsection*{Network models}

Binary, undirected Erd\H{o}s-R\'enyi networks of different densities were created by randomly selecting a fraction $f$ of all possible bidirectional edges in the network, and setting them to unity, with $f$ ranging between $0.01$ and $1.0$ in increments of $0.01$. 

For the rewiring experiment, we initially generated a binary, undirected network of 200 nodes with lattice topology and $5\%$ density. A copy of this network was then generated and its edges progressively randomised in $1\%$ increments, using a standard rewiring procedure to preserve the degree of each node.

Small-world character was measured via the small-world propensity (SWP) index \citep{muldoon2016scirep}, which quantifies the deviation of the network’s empirically observed clustering coefficient, $C_{obs}$, and characteristic path length, $L_{obs}$, from equivalent lattice ($C_{latt}$, $L_{latt}$) and random ($C_{rand}$, $L_{rand}$) networks of equal number of nodes and edges: 

\begin{align}
\text{SWP} = 1 - \sqrt{(\Delta_{C}^2  + \Delta_{L}^2)/2} 
\end{align}

with 
\begin{align}
\Delta_{C}=(C_{latt}  - C_{obs})/(C_{latt}  - C_{rand}) \end{align}

and
\begin{align}
\Delta_{L}=(L_{obs}  - L_{rand})/(L_{latt}  - L_{rand})
\end{align}

such that SWP is bound between $0$ and $1$.

To disambiguate the role of connectome topology in shaping the contribution to shortest paths, we relied on network null models~\citep{vasa2022natrevneurosci}. Specifically, we adopted the well-known Maslov-Sneppen degree-preserving rewired network, whereby edges are swapped so as to randomise the topology while preserving the exact binary degree of each node (degree sequence)~\citep{maslov2002science}.

\subsection*{Statistical analysis}

For the London transport networks, statistical significance of the empirical results was assessed against a population of 1000 network null models, constructed as described above. For both human and animal brain networks, the null distribution was obtained by rewiring each individual network, and the empirical and null distributions were compared using permutation-based paired t-tests (with 10,000 permutations). Non-parametric tests were chosen to ensure robustness to outliers. Effect size was computed as Hedge's \textit{g}. Tables with full descriptive statistics for the results reported in the main text are provided in the Supplementary.

\section*{Acknowledgments}
The authors are grateful to Yaniv Assaf and the Strauss Center for Neuroimaging of Tel Aviv University for making the MaMI dataset available, and to Joe Lizier, Pradeep Banerjee, and members of the Network Neuroscience Lab for helpful discussion. This research was supported by the Visitor Program of the Max Planck Institute for Mathematics in the Sciences, Leipzig (Germany) [to AIL]. 
FR was supported by the Fellowship Programme of the Institute of Cultural and Creative Industries of the University
of Kent, and the DIEP visitors programme at the University of Amsterdam. 
JJ acknowledges support from Grant 1514 of the German Israeli Foundation.

\section*{Conflicts of interest}
None.

\noindent

%---------------------------------------------
% BIBLIOGRAPHY
%---------------------------------------------
\bibliography{main}

%---------------------------------------------
%\clearpage
%\pagebreak
%\FloatBarrier
%\newpage

\appendix
%\section*{Supplementary Information}
%---------------------------------------------

% \widetext
% \begin{center}
% \textbf{\large Supplemental Materials}
% \end{center}
%%%%%%%%%% Merge with supplemental materials %%%%%%%%%%
%%%%%%%%%% Prefix a "S" to all equations, figures, tables and reset the counter %%%%%%%%%
\setcounter{equation}{0}
\setcounter{figure}{0}
\setcounter{table}{0}
\makeatletter
\renewcommand{\theequation}{S\arabic{equation}}
\renewcommand{\thefigure}{S\arabic{figure}}
\renewcommand{\thetable}{S\arabic{table}}

\section{Supplementary proofs}

This section contains supporting proofs for the properties of the efficiency-based Partial Network Decomposition (PND), presented in the Materials and Methods section of the main text. Our main goal is to prove that the proposed network redundancy function in Eq.~\eqref{eq:net_red} provides a non-negative PND of the network's average efficiency, such that $F_\partial^\alpha \geq 0, \forall \alpha$. We will do so by closely following the proofs for PID in Appendix D of Williams and Beer~\cite{williams2010nonnegative}, bearing in mind the differences between our network redundancy function $F_\cap^\alpha$ and their information redundancy function $I_\text{min}(S; \alpha)$. Note that some proofs in Ref.~\cite{williams2010nonnegative} do not depend on the properties of $I_\text{min}(S; \alpha)$, and rely only on the structure of the antichain lattice --- which is shared between PID and PND.

We begin by formally defining the efficiency $f$ of a pair of distinct nodes $\omega = (v_1, v_2) \in \mathcal{V} \times \mathcal{V}$ s.t. $v_1 \neq v_2$ as the minimum length of all paths between them in the set of edges $\mathcal{E}$:
\begin{align}
    f(\omega; \mathcal{E}) = \min_{\texttt{p} \in \texttt{P}(\omega; \mathcal{E})} |\texttt{p}|^{-1} ~ ,
    \label{eq:def_efficiency}
\end{align}
\noindent where $\texttt{P}(\omega; \mathcal{E})$ is the set of all paths between $v_1$ and $v_2$ in $\mathcal{E}$. For convenience, we repeat the definition of the redundancy function presented in the main text,
\begin{align}
    f_\cap^\alpha(\omega) = \min_{a\in\alpha} f(\omega; \mathcal{E}^a) ~ .
    \label{eq:def_red}
\end{align}
In the following, we may omit the dependence on $\omega$ for simplicity of notation. The following theorems (up to Theorem~\ref{th:F_nonneg}) hold for all $\omega$.

\begin{theorem}
    $f(\omega; \mathcal{E})$ is non-negative.
\end{theorem}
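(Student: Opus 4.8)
The plan is to observe that this is essentially immediate from the definition in Eq.~\eqref{eq:def_efficiency}, with the only point requiring care being the treatment of the disconnected case. First I would note that any element $\texttt{p} \in \texttt{P}(\omega; \mathcal{E})$ is a path between two \emph{distinct} nodes $v_1 \neq v_2$, so it traverses at least one edge, giving $|\texttt{p}| \geq 1$ and hence $|\texttt{p}|^{-1} \in (0,1]$. Since every term in the minimisation is strictly positive, whenever $\texttt{P}(\omega; \mathcal{E}) \neq \varnothing$ the minimum is attained (the set of path lengths is a nonempty subset of the positive integers, so it has a least element) and lies in $(0,1]$, in particular it is non-negative.

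Next I would address the remaining case, namely when $v_1$ and $v_2$ lie in different connected components of $(\mathcal{V}, \mathcal{E})$, so that $\texttt{P}(\omega; \mathcal{E}) = \varnothing$. Here one adopts the standard convention used throughout the network-efficiency literature that the efficiency of an unreachable pair is $0$ (equivalently, the shortest-path length is $+\infty$ and its inverse is $0$); this is also the convention implicitly in force in the main text, where disconnected networks are allowed. Under this convention $f(\omega; \mathcal{E}) = 0 \geq 0$, so the claim holds in this case as well, and combining the two cases establishes non-negativity for all $\omega$ and all $\mathcal{E}$.

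I do not expect any genuine obstacle here: the statement is a warm-up lemma whose entire content is fixing the sign conventions (path lengths are positive integers; the empty minimum is $0$) so that the subsequent, more substantial monotonicity and total-monotonicity results can be stated cleanly. The one thing I would be careful to make explicit is precisely this convention for the empty path set, since the bare formula $\min_{\texttt{p} \in \texttt{P}(\omega; \mathcal{E})} |\texttt{p}|^{-1}$ is otherwise ill-defined when $\texttt{P}(\omega; \mathcal{E})$ is empty, and all later arguments (e.g.\ that $f_\cap^\alpha = \min_{a \in \alpha} f(\cdot; \mathcal{E}^a)$ is well-behaved) tacitly rely on it.
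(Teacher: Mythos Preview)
Your proposal is correct and follows essentially the same approach as the paper: the paper's proof simply notes that path lengths are non-zero positive integers and that, by standard convention, the shortest path length between disconnected nodes is taken to be positive infinity. You have spelled out these two observations in more detail (including the attainment of the minimum and the explicit value $0$ in the disconnected case), but the underlying argument is identical.
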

\begin{proof}
    Follows from the fact that path lengths are non-zero positive integers. Following standard convention, the shortest path length between disconnected nodes is taken to be positive infinity.
\end{proof}

\begin{lemma} \label{lemma:f_monotone}
    $f(\omega; \mathcal{E}^a)$ increases monotonically under subset inclusion.
\end{lemma}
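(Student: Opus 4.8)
The plan is to reduce the statement to the elementary observation that enlarging the edge set can only add paths, never remove them, and that a minimum taken over a larger set can only get smaller. By ``subset inclusion'' I read the claim as: whenever $\mathcal{E}^{a} \subseteq \mathcal{E}^{a'}$ we have $f(\omega; \mathcal{E}^{a}) \le f(\omega; \mathcal{E}^{a'})$; this covers in particular the case $a \subseteq a'$ of index sets, since $\mathcal{E}^{a} = \bigcup_{i \in a} \mathcal{E}_i \subseteq \bigcup_{i \in a'} \mathcal{E}_i = \mathcal{E}^{a'}$. So I would fix a pair $\omega = (v_1, v_2)$ and two edge sets $\mathcal{E} \subseteq \mathcal{E}'$ on $\mathcal{V}$, and argue in three short steps.

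First I would show $\texttt{P}(\omega; \mathcal{E}) \subseteq \texttt{P}(\omega; \mathcal{E}')$: any path $\texttt{p}$ from $v_1$ to $v_2$ all of whose edges lie in $\mathcal{E}$ has, a fortiori, all of its edges in $\mathcal{E}'$, while the remaining path conditions (distinct vertices, consecutive edges) are intrinsic to $\texttt{p}$ and do not depend on the ambient edge set; hence $\texttt{p} \in \texttt{P}(\omega; \mathcal{E}')$. Next I would pass to shortest-path lengths: writing $\ell(\omega; \mathcal{E}) = \min_{\texttt{p} \in \texttt{P}(\omega; \mathcal{E})} |\texttt{p}|$, with the standard convention $\min \varnothing = +\infty$ that disconnected nodes are at distance $+\infty$ (the same convention used in the preceding theorem), the inclusion just established shows that the minimum defining $\ell(\omega; \mathcal{E}')$ ranges over a superset of the one defining $\ell(\omega; \mathcal{E})$, so $\ell(\omega; \mathcal{E}') \le \ell(\omega; \mathcal{E})$. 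Finally, recalling that $f$ is the reciprocal of this shortest-path length and that $t \mapsto t^{-1}$ is order-reversing on $(0, +\infty]$ with $(+\infty)^{-1} = 0$, applying it to both sides flips the inequality and gives $f(\omega; \mathcal{E}) \le f(\omega; \mathcal{E}')$, which is the claim.

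There is no genuinely hard step here; the only points that deserve a sentence of care are (i) making explicit that an $\mathcal{E}$-path \emph{is} an $\mathcal{E}'$-path, so that no path condition has to be re-verified in the larger network, and (ii) the treatment of the disconnected case, so that the reciprocal step is legitimate even when $\texttt{P}(\omega; \mathcal{E})$ is empty. If one prefers to sidestep reciprocals of $+\infty$ altogether, an equivalent route is a two-case argument: if $v_1$ and $v_2$ are connected in $\mathcal{E}$, every shortest $\mathcal{E}$-path is an $\mathcal{E}'$-path of the same length, so $\ell(\omega; \mathcal{E}')$ is finite and no larger than $\ell(\omega; \mathcal{E})$; and if they are not connected in $\mathcal{E}$, then $f(\omega; \mathcal{E}) = 0 \le f(\omega; \mathcal{E}')$ trivially. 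Either way the lemma follows, and it is precisely this ``more edges, shorter-or-equal shortest paths'' monotonicity that later feeds into the total-monotonicity and non-negativity arguments for $F_\cap^\alpha$.
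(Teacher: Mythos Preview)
Your proof is correct and follows essentially the same approach as the paper: from $a \subset b$ deduce $\mathcal{E}^a \subseteq \mathcal{E}^b$, hence $\texttt{P}(\omega;\mathcal{E}^a) \subseteq \texttt{P}(\omega;\mathcal{E}^b)$, so the minimum path length can only decrease and its reciprocal can only increase. Your treatment is slightly more careful than the paper's in making the disconnected case explicit, but the argument is the same.
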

\begin{proof}
    Consider $a, b \subseteq \{1,...,N\}$, with $a \subset b$. Recall that, by definition, $\mathcal{E}^a = \bigcup_{i=1}^k \mathcal{E}_{n_i}$ for any
    $a = \{n_1,..,n_k\} \subseteq \{1,\dots,N\}$. If $a \subset b$, then $\mathcal{E}^a \subseteq \mathcal{E}^b$. Since the set of paths $\texttt{P}(\omega; \mathcal{E})$ grows with the inclusion of more edges in $\mathcal{E}$, the minimum of any function on $\texttt{P}$ must decrease with said inclusion, and therefore its inverse must increase. Thus, if $a \subset b$ then $f(\omega; \mathcal{E}^a) \leq f(\omega; \mathcal{E}^b)$.
\end{proof}

\begin{theorem}\label{th:fcap_monotone}
    $f_\cap^\alpha$ increases monotonically in the redundancy lattice.
\end{theorem}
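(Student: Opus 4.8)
The plan is to derive the statement directly from Lemma~\ref{lemma:f_monotone} together with the combinatorial definition of the order $\preceq$ on the antichain (redundancy) lattice. Recall that, in the Williams--Beer convention used here, for antichains $\alpha,\beta$ one has $\alpha\preceq\beta$ iff every element of $\beta$ contains some element of $\alpha$, i.e.\ $\forall b\in\beta\ \exists a\in\alpha$ with $a\subseteq b$. So, fixing a node pair $\omega$ and antichains $\alpha\preceq\beta$, the first step is to unpack this definition: for each $b\in\beta$ pick a witness $a(b)\in\alpha$ with $a(b)\subseteq b$. As in the proof of Lemma~\ref{lemma:f_monotone}, $a(b)\subseteq b$ implies $\mathcal{E}^{a(b)}\subseteq\mathcal{E}^{b}$.

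The second step is a short chain of inequalities, carried out for each $b\in\beta$: on one hand $f_\cap^\alpha(\omega)=\min_{a\in\alpha}f(\omega;\mathcal{E}^a)\le f(\omega;\mathcal{E}^{a(b)})$ because $a(b)\in\alpha$; on the other hand $f(\omega;\mathcal{E}^{a(b)})\le f(\omega;\mathcal{E}^{b})$ by Lemma~\ref{lemma:f_monotone} applied to $\mathcal{E}^{a(b)}\subseteq\mathcal{E}^{b}$. Composing these gives $f_\cap^\alpha(\omega)\le f(\omega;\mathcal{E}^{b})$ for \emph{every} $b\in\beta$, so taking the minimum over $b\in\beta$ yields $f_\cap^\alpha(\omega)\le\min_{b\in\beta}f(\omega;\mathcal{E}^{b})=f_\cap^\beta(\omega)$. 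Since $\omega$ was arbitrary, this is precisely the claimed monotone increase of $f_\cap^\alpha$ as one moves up the lattice.

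I do not expect a genuine obstacle. The argument mirrors the monotonicity statement for $I_{\min}$ in Ref.~\cite{williams2010nonnegative} and depends only on (i) the combinatorial definition of $\preceq$ and (ii) monotonicity of $f$ under edge inclusion (Lemma~\ref{lemma:f_monotone}); it never invokes the specific form of efficiency as an inverse shortest-path length. The two things that need care are bookkeeping rather than mathematics: orienting the inequality consistently with the lattice convention (the all-singletons antichain sits at the bottom and is least redundant, while the top antichain $\{\{1,\dots,N\}\}$ recovers $F(\mathcal{E})$), and recalling that restricting the domain of $f_\cap$ to antichains --- already justified via the deterministic-equality axiom --- is exactly what makes ``the redundancy lattice'' the correct domain for the statement.
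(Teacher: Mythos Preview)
Your argument is correct and is essentially the same as the paper's: the paper's proof simply defers to the Williams--Beer argument with $I_{\min}$ replaced by $f_\cap^\alpha$ and $I(S{=}s;a)$ by $f(\omega;\mathcal{E}^a)$, and what you have written is precisely that argument spelled out in full. The only cosmetic point is that Lemma~\ref{lemma:f_monotone} is stated for strict inclusion $a\subset b$, whereas you invoke it for $a(b)\subseteq b$; the case $a(b)=b$ is of course trivial, so this is not a gap.
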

\begin{proof}
    This proof depends only on the structure of the redundancy lattice and Lemma~\ref{lemma:f_monotone}, both of which are shared between Ref.~\cite{williams2010nonnegative} and this work. Proof is exactly as in the original work, replacing $I_\text{min}(S; \alpha)$ by $f_\cap^\alpha$ and $I(S=s; a)$ by $f(\omega; \mathcal{E}^a)$.
\end{proof}

\begin{theorem}
    $f_\partial^\alpha$ can be written in closed form as
    \begin{align}
        f_\partial^\alpha = f_\cap^\alpha - \sum_{k=1}^{|\alpha^-|} (-1)^{k-1}\sum_{\substack{\mathcal{B}\subseteq \alpha^-\\|\mathcal{B}|=k}} f_\cap^{\bigwedge\mathcal{B}}~.
        \label{eq:f_meet}
    \end{align}
\end{theorem}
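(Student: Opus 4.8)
The plan is to recognise that this identity is purely order-theoretic: it follows from the recursive definition~\eqref{eq:moebius} together with structural facts about the antichain (redundancy) lattice, and uses no analytic property of $f_\cap^\alpha$. As with Theorem~\ref{th:fcap_monotone}, the argument is then the same as the corresponding one in Appendix~D of Williams and Beer~\cite{williams2010nonnegative}, replacing $I_\text{min}(S;\alpha)$ by $f_\cap^\alpha$ and $I_\partial$ by $f_\partial^\alpha$. The starting point is the ``cumulative'' form of~\eqref{eq:moebius}, namely $f_\cap^\alpha = \sum_{\beta \preceq \alpha} f_\partial^\beta$ for every antichain $\alpha$, which is just~\eqref{eq:moebius} with the sum moved to the other side.

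Next I would set up the combinatorics. Let $\alpha^-$ denote the set of lower covers of $\alpha$ in the lattice (following Ref.~\cite{williams2010nonnegative}), and for an antichain $\gamma$ write $\downarrow\!\gamma = \{\beta \in \mathcal{A} : \beta \preceq \gamma\}$ for its principal down-set. Two standard facts about finite lattices are needed: (i) $\{\beta : \beta \prec \alpha\} = \bigcup_{\gamma \in \alpha^-} \downarrow\!\gamma$, since any antichain strictly below $\alpha$ lies below some maximal element of $\{\beta : \beta \prec \alpha\}$, and those maximal elements are exactly the lower covers $\alpha^-$; and (ii) for any finite $\mathcal{B} \subseteq \alpha^-$, the common lower bounds of $\mathcal{B}$ are precisely the elements of $\downarrow\!(\bigwedge \mathcal{B})$, i.e. $\bigcap_{\gamma \in \mathcal{B}} \downarrow\!\gamma = \downarrow\!(\bigwedge \mathcal{B})$, which is the definition of the meet as greatest lower bound (and uses that the redundancy lattice genuinely is a lattice, as established by Williams and Beer).

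Putting these together, I would expand $\sum_{\beta \prec \alpha} f_\partial^\beta = \sum_{\beta \in \bigcup_{\gamma \in \alpha^-} \downarrow\gamma} f_\partial^\beta$ by inclusion-exclusion over the family $\{\downarrow\!\gamma\}_{\gamma \in \alpha^-}$, which after applying fact (ii) becomes $\sum_{k=1}^{|\alpha^-|} (-1)^{k-1} \sum_{\mathcal{B} \subseteq \alpha^-,\, |\mathcal{B}| = k} \sum_{\beta \in \downarrow(\bigwedge \mathcal{B})} f_\partial^\beta$, and each inner sum collapses to $f_\cap^{\bigwedge \mathcal{B}}$ by the cumulative identity. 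Substituting into $f_\partial^\alpha = f_\cap^\alpha - \sum_{\beta \prec \alpha} f_\partial^\beta$ yields~\eqref{eq:f_meet}; as a consistency check, when $\alpha$ is the bottom element one has $\alpha^- = \varnothing$, the sum is empty, and $f_\partial^\alpha = f_\cap^\alpha$. None of the steps is genuinely hard: the main thing to be careful about is that facts (i) and (ii), and the lattice structure they rely on, transfer verbatim from the PID setting, which they do because they concern only the antichain lattice and not whichever redundancy function is placed on it. So the real ``obstacle'' is bookkeeping and making the correspondence with Williams and Beer's Appendix~D explicit.
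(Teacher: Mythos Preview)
Your proposal is correct and takes essentially the same approach as the paper: both observe that the identity is purely order-theoretic, depending only on the Moebius inversion definition~\eqref{eq:moebius} and the antichain-lattice structure, so that the proof of Williams and Beer~\cite{williams2010nonnegative} carries over verbatim upon replacing $I_\text{min}$ by $f_\cap^\alpha$ and $\Pi_{\text{\textbf{R}}}$ by $f_\partial^\alpha$. In fact you go further than the paper, which simply defers to Ref.~\cite{williams2010nonnegative} without rehearsing the inclusion--exclusion argument you sketch.
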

\begin{proof}
    This proof depends only on the structure of the redundancy lattice and Eq.~\eqref{eq:moebius} defining $f_\partial^\alpha$ as the Moebius inversion of $f_\cap^\alpha$, both of which are shared between Ref.~\cite{williams2010nonnegative} and this work. Proof is exactly as in the original work, replacing $I_\text{min}(S; \alpha)$ by $f_\cap^\alpha$ and $\Pi_{\text{\textbf{R}}}(S; \alpha)$ by $f_\partial^\alpha$.
\end{proof}

\begin{theorem}\label{th:f_maxmin}
    $f_\partial^\alpha$ can be written in closed form as
    \begin{align}
        f_\partial^\alpha = f_\cap^\alpha - \max_{\beta\in\alpha^-} \min_{b\in\beta} f(\omega; \mathcal{E}^b)~.
    \end{align}
\end{theorem}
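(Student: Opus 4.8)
The plan is to start from the closed form already established in Eq.~\eqref{eq:f_meet}, namely $f_\partial^\alpha = f_\cap^\alpha - \sum_{k=1}^{|\alpha^-|}(-1)^{k-1}\sum_{|\mathcal{B}|=k} f_\cap^{\bigwedge\mathcal{B}}$, and show that the alternating inclusion--exclusion sum collapses to the single term $\max_{\beta\in\alpha^-}\min_{b\in\beta} f(\omega;\mathcal{E}^b)$. The key observation is that, for our specific redundancy function $f_\cap^\alpha(\omega) = \min_{a\in\alpha} f(\omega;\mathcal{E}^a)$, the meet operation in the antichain lattice behaves compatibly with the $\min$: for a collection $\mathcal{B} = \{\beta_1,\dots,\beta_k\}$ of elements of $\alpha^-$, one has $f_\cap^{\bigwedge\mathcal{B}} = \min_{j}\min_{b\in\beta_j} f(\omega;\mathcal{E}^b) = \min_j f_\cap^{\beta_j}$. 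This is because the meet $\bigwedge\mathcal{B}$ in the redundancy lattice is (up to down-closure, which $f_\cap$ is insensitive to by the deterministic-equality / monotonicity structure) the union of the source collections, and taking a min over a union of index sets is the same as the min of the mins.

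With that identity in hand, write $x_\beta := f_\cap^\beta(\omega) = \min_{b\in\beta} f(\omega;\mathcal{E}^b)$ for each $\beta\in\alpha^-$. The inclusion--exclusion sum becomes $\sum_{k=1}^{|\alpha^-|}(-1)^{k-1}\sum_{|\mathcal{B}|=k}\min_{\beta\in\mathcal{B}} x_\beta$. This is a classical identity: for any finite collection of real numbers $\{x_\beta\}$, the alternating sum of minima over all nonempty subsets equals $\max_\beta x_\beta$. (It is the ``min'' dual of the familiar $\max$ inclusion--exclusion; one proves it by sorting the $x_\beta$ in decreasing order and checking that each value's net coefficient telescopes to $1$ for the maximum and $0$ for the rest, or by induction on $|\alpha^-|$.) Substituting gives $f_\partial^\alpha = f_\cap^\alpha - \max_{\beta\in\alpha^-} x_\beta = f_\cap^\alpha - \max_{\beta\in\alpha^-}\min_{b\in\beta} f(\omega;\mathcal{E}^b)$, as claimed. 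Since this holds pointwise in $\omega$, it also holds (after taking $\mathbb{E}$) for the averaged atoms.

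The main obstacle I expect is the first step: carefully justifying that $f_\cap^{\bigwedge\mathcal{B}} = \min_{j} f_\cap^{\beta_j}$. This requires unpacking the definition of the meet $\bigwedge$ in the antichain lattice of Williams and Beer and verifying that, when evaluated by our particular $f_\cap$, it reduces cleanly to combining the underlying edge-set collections --- the subtlety being that the lattice meet involves down-sets of antichains, and one must invoke the deterministic-equality axiom (and Lemma~\ref{lemma:f_monotone}) to argue that the extra elements introduced by down-closure do not change the value of the $\min$. Once that structural fact is pinned down, the remaining combinatorial identity for the alternating sum of minima is routine. An alternative, slightly more direct route would be to mirror the corresponding step in Appendix~D of Ref.~\cite{williams2010nonnegative} verbatim, replacing $I_\text{min}(S;\cdot)$ by $f_\cap^{\,\cdot}$ throughout and noting that the argument there uses only monotonicity and the lattice structure --- both of which we have already established for $f_\cap^\alpha$.
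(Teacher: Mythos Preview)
Your proposal is correct and follows essentially the same route as the paper: start from Eq.~\eqref{eq:f_meet}, use the fact that the meet in the redundancy lattice is $\underline{\alpha\cup\beta}$ together with Lemma~\ref{lemma:f_monotone} to reduce $f_\cap^{\bigwedge\mathcal{B}}$ to $\min_{\beta\in\mathcal{B}} f_\cap^\beta$, and then apply the maximum--minimums identity to collapse the alternating sum. The ``main obstacle'' you flag is exactly what the paper handles via Eq.~\eqref{eq:meet} and Lemma~\ref{lemma:f_monotone}, so your anticipated justification is the right one.
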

\begin{proof}
    First we note that, for the redundancy lattice, the following statement holds:
    \begin{align}
        \alpha \wedge \beta = \underline{\alpha \cup \beta} ~ ,
        \label{eq:meet}
    \end{align}
    \noindent where $\underline{X}$ is the set of minimal elements of a poset $X$ (see Refs.~\cite{crampton2000two,williams2010nonnegative} for a proof). Combining Eqs.~\eqref{eq:def_red} and~\eqref{eq:f_meet} yield
    \begin{align}
        f_\partial^\alpha &= f_\cap^\alpha - \sum_{k=1}^{|\alpha^-|} (-1)^{k-1}\sum_{\substack{\mathcal{B}\subseteq \alpha^-\\|\mathcal{B}|=k}} \min_{b\in\bigwedge\mathcal{B}} f(\omega; \mathcal{E}^b)~, \\
        \intertext{and by Lemma~\ref{lemma:f_monotone} and Eq.\eqref{eq:meet},}
        &= f_\cap^\alpha - \sum_{k=1}^{|\alpha^-|} (-1)^{k-1}\sum_{\substack{\mathcal{B}\subseteq \alpha^-\\|\mathcal{B}|=k}} \min_{\beta\in\mathcal{B}} \min_{b\in\beta} f(\omega; \mathcal{E}^b) ~ . \\
        \intertext{Then, applying the maximum-minimums identity~\cite{williams2010nonnegative} we have}
        &= f_\cap^\alpha - \max_{\beta\in\alpha^-} \min_{b\in\beta} f(\omega; \mathcal{E}^b) ~ , \\
        \intertext{Or, equivalently, by applying Eq.~\eqref{eq:def_red} again,}
        &= f_\cap^\alpha - \max_{\beta\in\alpha^-} f_\cap^\beta ~ .
    \end{align}
\end{proof}

\begin{theorem}\label{th:f_nonneg}
    $f_\partial^\alpha$ is non-negative.
\end{theorem}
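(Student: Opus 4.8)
The plan is to read the result straight off the closed form established in Theorem~\ref{th:f_maxmin}, combined with the lattice monotonicity of Theorem~\ref{th:fcap_monotone}. Theorem~\ref{th:f_maxmin} gives
\begin{align}
    f_\partial^\alpha = f_\cap^\alpha - \max_{\beta\in\alpha^-} f_\cap^\beta ~ ,
\end{align}
so it suffices to show that $f_\cap^\alpha$ dominates $f_\cap^\beta$ for every $\beta$ that appears in the maximum, i.e. for every $\beta\in\alpha^-$.

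By construction of the set $\alpha^-$ of elements sitting strictly below $\alpha$ in the redundancy lattice, each such $\beta$ satisfies $\beta \prec \alpha$. Theorem~\ref{th:fcap_monotone} states precisely that $f_\cap$ is non-decreasing along the lattice order, so $f_\cap^\beta \le f_\cap^\alpha$ for every $\beta\in\alpha^-$, hence $\max_{\beta\in\alpha^-} f_\cap^\beta \le f_\cap^\alpha$. Substituting into the displayed identity yields $f_\partial^\alpha \ge 0$, which is the claim; taking the expectation over $\Omega$ then gives $F_\partial^\alpha = \E{f_\partial^\alpha(\Omega)} \ge 0$, so all network atoms are non-negative.

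The only case not covered by this argument is the bottom element of the lattice, $\alpha = \{\{1\},\dots,\{N\}\}$, for which $\alpha^-$ is empty and the maximum is vacuous. There one falls back on $f_\partial^\alpha = f_\cap^\alpha = \min_{a\in\alpha} f(\omega;\mathcal{E}^a)$, which is non-negative because each efficiency $f(\omega;\mathcal{E}^a)\ge 0$ (path lengths, and hence their inverses, are non-negative, with the convention that disconnected pairs contribute $0$). I do not expect a genuine obstacle here: the substantive work — deriving the max-minimums form of $f_\partial^\alpha$ and proving lattice monotonicity of $f_\cap$ — has already been carried out in the preceding theorems, and this proof merely stitches them together. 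The one point needing a moment of care is matching the direction of the lattice order to the monotonicity statement, so that $\beta\prec\alpha$ really yields $f_\cap^\beta \le f_\cap^\alpha$ rather than the reverse, together with treating the empty-$\alpha^-$ base case separately.
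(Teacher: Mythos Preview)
Your proposal is correct and follows essentially the same approach as the paper: use Theorem~\ref{th:f_maxmin} together with the monotonicity of Theorem~\ref{th:fcap_monotone} for the generic case, and treat the infimum of the lattice separately via the non-negativity of $f$. The only minor remark is that $\alpha^-$ denotes the elements \emph{covered by} $\alpha$ rather than all elements strictly below it, but since covered elements satisfy $\beta\prec\alpha$ your monotonicity argument goes through unchanged.
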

\begin{proof}
    If $\alpha$ is the infimum of $\mathcal{A}$, $f_\partial^\alpha = f_\cap^\alpha$, and $f_\partial^\alpha \geq 0$ follows from Theorem~\ref{th:f_nonneg}. If $\alpha$ is not the infimum of $\mathcal{A}$, $f_\partial^\alpha \geq 0$ follows directly from Theorems~\ref{th:fcap_monotone} and \ref{th:f_maxmin}.
\end{proof}

\begin{theorem}\label{th:F_nonneg}
    $F_\partial^\alpha$ is non-negative.
\end{theorem}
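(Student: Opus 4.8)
The plan is to reduce this aggregate statement to the pointwise Theorem~\ref{th:f_nonneg} via linearity of expectation. The key observation is that the Moebius inversion defining the atoms, Eq.~\eqref{eq:moebius}, is a fixed linear transformation on the antichain lattice $\mathcal{A}$ whose coefficients depend only on the lattice structure and not on the node pair $\omega$: the very same recursion produces $F_\partial^\alpha$ from $\{F_\cap^\beta\}_{\beta \preceq \alpha}$ and $f_\partial^\alpha(\omega)$ from $\{f_\cap^\beta(\omega)\}_{\beta \preceq \alpha}$.

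First I would record that $F_\cap^\beta = \E{f_\cap^\beta(\Omega)} = \sum_\omega f_\cap^\beta(\omega)\, p(\omega)$ for every antichain $\beta$, which is just Eq.~\eqref{eq:net_red} together with the pointwise redundancy $f_\cap^\beta(\omega) = \min_{a \in \beta} f(\omega;\mathcal{E}^a)$ and the fact that $p$ is a probability measure. Writing the Moebius inversion as $F_\partial^\alpha = \sum_{\beta \preceq \alpha} \mu(\beta,\alpha)\, F_\cap^\beta$ with $\omega$-independent coefficients $\mu(\beta,\alpha)$, and likewise $f_\partial^\alpha(\omega) = \sum_{\beta \preceq \alpha} \mu(\beta,\alpha)\, f_\cap^\beta(\omega)$, interchanging the two finite sums gives
\begin{align}
    F_\partial^\alpha = \sum_{\beta \preceq \alpha} \mu(\beta,\alpha) \sum_\omega f_\cap^\beta(\omega)\, p(\omega) = \sum_\omega p(\omega) \sum_{\beta \preceq \alpha} \mu(\beta,\alpha)\, f_\cap^\beta(\omega) = \E{f_\partial^\alpha(\Omega)} ~ .
\end{align}
Then I would invoke Theorem~\ref{th:f_nonneg}, which holds for every node pair $\omega$, so $f_\partial^\alpha(\Omega)$ is a non-negative random variable; since $p(\omega) \geq 0$ for all $\omega$, its expectation is non-negative, and hence $F_\partial^\alpha \geq 0$.

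I do not anticipate a real obstacle. The only point requiring a line of care is the commutation of the Moebius inversion with the expectation, which holds because both operations are linear and the antichain lattice --- hence the Moebius coefficients --- is the same for all $\omega$. An alternative that sidesteps naming the Moebius coefficients altogether is a straightforward induction on the height of $\alpha$ in $\mathcal{A}$ using Eq.~\eqref{eq:moebius} directly: for the infimum $\alpha$ one has $F_\partial^\alpha = F_\cap^\alpha = \E{f_\cap^\alpha(\Omega)} = \E{f_\partial^\alpha(\Omega)}$, and the inductive step subtracts the lower atoms, each of which has already been identified with $\E{f_\partial^\beta(\Omega)}$; non-negativity then follows from Theorem~\ref{th:f_nonneg} in either route.
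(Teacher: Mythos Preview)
Your proposal is correct and follows essentially the same route as the paper: reduce to the pointwise result of Theorem~\ref{th:f_nonneg} via the identity $F_\partial^\alpha = \E{f_\partial^\alpha(\Omega)}$ and the non-negativity of expectation. The paper simply asserts this identity as a fact, whereas you supply the (straightforward) justification that the Moebius inversion commutes with the expectation; your added care is welcome but not a different argument.
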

\begin{proof}
    Follows directly from Theorem~\ref{th:f_nonneg} and the fact that $F_\partial^\alpha = \E{f_\partial^\alpha}$ is a weighted sum of non-negative quantities with non-negative weights.
\end{proof}

\section{Supplementary figures and tables}

This section contains Figure~\ref{fig_Lausanne_S1000_replication}, replicating our main result in human structural brain networks in a different dataset; as well as Tables~\ref{table_SC_vs_Nulls} and~\ref{table_MAMI_vs_Nulls}, reporting detailed statistics of the comparison between our empirically observed results and those obtained in null network models.

% \begin{figure} 
% \centering
% \includegraphics[width=0.98\columnwidth]{Figures/SI_MAMI_violins_NULL.png}
% \caption{{\bf Prevalence of synergistic, unique, and redundant contributions, for degree-preserving rewirings of long- vs short-range mammalian structural brain networks.} 
% Y-axis: proportion of shortest paths accounted for by each PND term. Box-plots indicate the median and inter-quartile range of the distribution. Each data-point is one animal (N=$220$).}
% \label{SI_fig_MAMI_violins_NULL}
% \end{figure}

\begin{figure*} 
\centering
\includegraphics[width=0.98\textwidth]{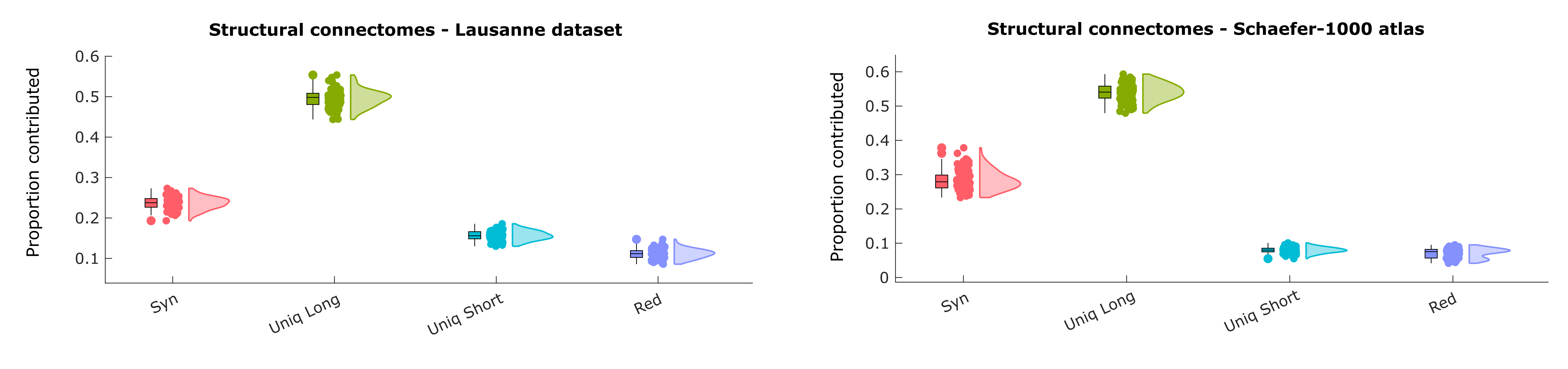}
\caption{{\bf Replication with alternative reconstruction of the human structural connectome}. Left: connectomes reconstructed from an independent dataset of Diffusion Spectrum Imaging data (N=$70$ subjects), using the Lausanne anatomical parcellation with 234 cortical and subcortical regions. Right: connectome reconstructed from an alternative sub-parcellation of the Schaefer functional atlas with $1000$ cortical nodes (N=$100$ subjects). Y-axis: proportion of shortest paths accounted for by each PID term. Box-plots indicate the median and inter-quartile range of the distribution. Each data-point is one subject }
\label{fig_Lausanne_S1000_replication}
\end{figure*}

% %%% EXAMPLE
% \begin{table*}[]
% \begin{tabular}{@{}|c|c|c|c|c|c|c|c|c|@{}}
% \toprule
%                                  & Mean1    & SD1   & Mean2    & SD2   & df & t-score & p-value          & Effect Size \\ \hline
% empirical vs geometry-preserving & 1.27e+06 & 9560  & 1.28e+06 & 9630  & 69 & -15.03  & \textless{}0.001 & -1.1        \\ \hline
% empirical vs degree-preserving   & 1.27e+06 & 9560  & 1.35e+06 & 11300 & 69 & -44.91  & \textless{}0.001 & -7.3        \\ \hline
% degree- vs geometry-preserving   & 1.35e+06 & 11300 & 1.28e+06 & 9630  & 69 & 36.34   & \textless{}0.001 & 6.3   \\ \hline     
% \end{tabular}
% \caption{Subject-level results for the Lausanne dataset. Statistical comparison between subject-level overall transition energy distributions, for the empirical human connectome ($N = 70$ subjects from the Lausanne dataset) and corresponding degree-preserving and degree- and cost-preserving rewired nulls.}
% \label{table_Subjectwise_Lausanne}
% \end{table*}

%Human SC vs Null
% Please add the following required packages to your document preamble:
% \usepackage{booktabs}
\begin{table*}[]
\begin{tabular}{@{}lllllllllll@{}}
\toprule
                      & \multicolumn{1}{c}{\textbf{Real Mean}} & \multicolumn{1}{c}{\textbf{Real SD}} & \multicolumn{1}{c}{\textbf{Null Mean}} & \multicolumn{1}{c}{\textbf{Null SD}} & \multicolumn{1}{c}{\textbf{tStat}} & \multicolumn{1}{c}{\textbf{df}} & \multicolumn{1}{c}{\textbf{pVal}} & \multicolumn{1}{c}{\textbf{Hedges g}} & \multicolumn{1}{c}{\textbf{CI Lower}} & \multicolumn{1}{c}{\textbf{CI Upper}} \\ \midrule
\textbf{Synergy}      & 1.84E-01                               & 1.94E-02                             & 1.32E-01                               & 2.05E-02                             & 34.85                              & 99                              & p \textless\ 0.001                 & 2.57                                  & 2.31                                       & 2.92                                      \\ %\cmidrule(lr){2-12}
\textbf{Unique Long}  & 4.97E-01                               & 2.15E-02                             & 2.60E-01                               & 8.24E-03                             & 115.65                             & 99                              & p \textless\ 0.001                 & 14.49                                 & 12.84                                      & 16.93                                     \\
\textbf{Unique Short} & 1.58E-01                               & 1.10E-02                             & 2.51E-01                               & 8.23E-03                             & -60.48                             & 99                              & p \textless\ 0.001                 & -9.43                                 & -10.71                                     & -8.46                                     \\
\textbf{Redundancy}   & 1.61E-01                               & 1.72E-02                             & 3.57E-01                               & 3.09E-02                             & -114.75                            & 99                              & p \textless\ 0.001                 & -7.79                                 & -8.95                                      & -6.99                                     \\ \hline
\end{tabular}
\caption{Results for the statistical comparison between human structural connectivity networks, and corresponding degree-preserving rewired nulls.}
\label{table_SC_vs_Nulls}
\end{table*}

%MAMI vs Null
% Please add the following required packages to your document preamble:
% \usepackage{booktabs}
\begin{table*}[]
\begin{tabular}{@{}lllllllllll@{}}
\toprule
                      & \multicolumn{1}{c}{\textbf{Real Mean}} & \multicolumn{1}{c}{\textbf{Real SD}} & \multicolumn{1}{c}{\textbf{Null Mean}} & \multicolumn{1}{c}{\textbf{Null SD}} & \multicolumn{1}{c}{\textbf{tStat}} & \multicolumn{1}{c}{\textbf{df}} & \multicolumn{1}{c}{\textbf{pVal}} & \multicolumn{1}{c}{\textbf{Hedges g}} & \multicolumn{1}{c}{\textbf{CI Lower}} & \multicolumn{1}{c}{\textbf{CI Upper}} \\ \midrule
\textbf{Synergy}      & 9.19E-02                               & 4.46E-02                             & 4.94E-02                               & 4.03E-02                             & 48.01                              & 219                             & p \textless 0.001                 & 1                                     & 0.85                                       & 1.22                                      \\ %\cmidrule
\textbf{Unique Long}  & 3.43E-01                               & 3.38E-02                             & 2.64E-01                               & 1.98E-02                             & 28.15                              & 219                             & p \textless 0.001                 & 2.87                                  & 2.6                                        & 3.19                                      \\
\textbf{Unique Short} & 2.19E-01                               & 2.31E-02                             & 1.98E-01                               & 1.93E-02                             & 13.85                              & 219                             & p \textless 0.001                 & 1.02                                  & 0.85                                       & 1.2                                       \\
\textbf{Redundancy}   & 3.45E-01                               & 4.94E-02                             & 4.89E-01                               & 7.12E-02                             & -50.02                             & 219                             & p \textless 0.001                 & -2.34                                 & -2.73                                      & -2.04                                     \\ \hline %\bottomrule
\end{tabular}
\caption{Results for the statistical comparison between mammalian structural connectivity networks, and corresponding degree-preserving rewired nulls.}
\label{table_MAMI_vs_Nulls}
\end{table*}

% \appendix

%  \subsection{Rejecting an alternative possible formulation}

% An obvious alternative – arguably the strictest – would be a definition of path redundancy whereby shortest paths in networks A and B are redundant iff they would involve traversing exactly the same edges (note that the two criteria coincide, for every shortest path, when network B is a copy of network A). This “redundancy as identity” means that the same path would literally be available on both networks. However, we can easily see that this alternative definition of path redundancy suffers from a number of drawbacks. Conceptually, in the case of transportation networks, or social networks, the ease of reaching a destination node (in terms of time/money/number of steps required) seems to be of far greater importance than the specific identity of the intermediate steps, for most purposes. From a practical standpoint, identifying the length of the shortest paths between two nodes in networks A and B, and checking whether they are the same, is much simpler than (and a subset of) enumerating all possible shortest paths and determining whether any of them match. Finally, this option would leave indeterminate the case where shortest paths of equal length exist, but do not coincide.

\end{document}